\theoremstyle{plain}
\newtheorem{proposition}{Proposition}
\theoremstyle{definition}
\providecommand{\bo}{\mathbf} 
\providecommand{\bs}{\boldsymbol}
\providecommand{\E}{\mathbb{E}}
\providecommand{\cov}{\mathrm{COV}}
\providecommand{\diag}{\mathrm{diag}}
\providecommand{\covAxis}{\mathrm{COVAXIS}}
\providecommand{\tcov}{\mathrm{TCOV}}
\providecommand{\mcd}{\mathrm{MCD}}
\newcommand{\pkg}[1]{\texttt{#1}}
\newcommand{\proglang}[1]{\textsf{#1}}
\title{Invariant Coordinate Selection and Fisher discriminant subspace \\ beyond the case of two groups}
\author{ \href{https://orcid.org/0009-0003-0790-3720}{\includegraphics[scale=0.06]{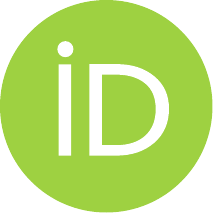}\hspace{1mm}Colombe Becquart} \\
Toulouse School of Economics  \\ France \\
Université de Toulouse \\
 France \\
	\And
	\href{https://orcid.org/0000-0002-6511-9091}{\includegraphics[scale=0.06]{orcid.pdf}\hspace{1mm}Aurore Archimbaud}\thanks{Corresponding author: a.archimbaud@tbs-education.fr} \\
	TBS Business School \\ France \\
	\And
    \href{https://orcid.org/0000-0001-8970-8061}{\includegraphics[scale=0.06]{orcid.pdf}\hspace{1mm}Anne Ruiz-Gazen} \\
	Toulouse School of Economics \\
 France \\
    \And
    {\includegraphics[scale=0.06]{orcid.pdf}\hspace{1mm}Luka Pril\'{c}} \\
	Toulouse School of Economics \\
   France \\
    \And
    \href{https://orcid.org/0000-0002-3758-8501}{\includegraphics[scale=0.06]{orcid.pdf}\hspace{1mm}Klaus Nordhausen} \\
	University of Helsinki\\ Finland \\University of Jyv\"askyl\"a\ \\ Finland \\
}
\begin{document}

\maketitle

\begin{abstract}
Invariant Coordinate Selection (ICS) is a multivariate technique that relies on the simultaneous diagonalization of two scatter matrices. It serves various purposes, including its use as a dimension reduction tool prior to clustering or outlier detection. 
ICS's theoretical foundation establishes why and when the identified subspace should contain relevant information by demonstrating its connection with the Fisher discriminant subspace (FDS). These general results have been examined in detail primarily for specific scatter combinations within a two-cluster framework. In this study, we expand these investigations to include more clusters and scatter combinations. Our analysis reveals the importance of distinguishing whether the group centers matrix has full rank. In the full-rank case, we establish deeper connections between ICS and FDS. We provide a detailed study of these relationships for three clusters when the group centers matrix has full rank and when it does not. Based on these expanded theoretical insights and supported by numerical studies, we conclude that ICS is indeed suitable for recovering the FDS under very general settings and cases of failure seem rare. 

\end{abstract}

\keywords{Dimension reduction \and
Mixture of elliptical distributions \and
Scatter matrix \and
Simultaneous diagonalization \and
Subspace estimation}


\section{Introduction}\label{sec:intro}

In many fields, the number of variables is increasing while it is often assumed that the actual information of interest remains contained within a low-dimensional subspace. This is the fundamental idea behind dimension reduction (DR): that it is possible to estimate this lower-dimensional space without losing crucial information, and that analyzing data within this subspace simplifies the process. Clustering and outlier detection are two unsupervised multivariate methods that can benefit significantly from prior dimension reduction. However, the justifications for why a specific DR method is suitable to recover an effective subspace for clustering and outlier detection are often heuristic in nature. Typical dimension reduction methods include principal component analysis (PCA, \cite{hotelling_analysis_1933,joliffe_principal_2002}), projection pursuit (PP, \cite{huber_projection_1985,jones_what_1987,fischer_repplab_2021}), and invariant coordinate selection (ICS, \cite{tyler_invariant_2009, caussinus_interesting_1990}).

Recent research has sought to provide justifications for using these methods as preprocessing steps, often considering specific settings. The most popular framework involves the Gaussian mixture model, with the benchmark being whether the DR method can estimate Fisher discriminant subspace (FDS) \cite{fisher_use_1936} in an unsupervized, or blind, manner. Outlier detection approaches are incorporated into this framework when some clusters are rare.

The use of PCA in the context of clustering, also known as tandem PCA \cite{arabie_cluster_1994}, can be justified using its relationship to principle points as pointed out in \cite{TarpeyLiFlury1995,MatsuuraKurata2011}. PCA will, however, only rarely recover the FDS as shown in  \cite{radojicic_large-sample_2021}.
PP is considered in this framework when skewness or kurtosis are used as PP indices, and conditions under which these estimate the FDS are detailed in \cite{radojicic_large-sample_2021,RadjicicNordhausenVirta2025} in the two-cluster case.
ICS is the most recent DR method among those mentioned. ICS simultaneously diagonalizes two scatter matrices and has been considered for outlier detection in \cite{archimbaud_ics_2018} and for tandem clustering in \cite{alfons_tandem_2024}. In \cite{tyler_invariant_2009}, the authors actually provide very general results for ICS with respect to FDS estimation. However, detailed results exist only for very specific scatter combinations in very specific settings. The goal of this paper is to extend these results to broader settings. The structure of the paper is as follows. In Section~\ref{sec:icsFDS} we recall ICS in more detail, particularly focusing on its application in estimating FDS. Section~\ref{sec:iraccen} examines the properties of ICS for DR when the clusters in the mixtures exhibit no variability. 
In this context, and when the dimension of the FDS is equal to the number of groups minus one, we prove that for any scatter combination, the ICS eigenvalues do not depend on the cluster locations but only on the cluster proportions. 
Section~\ref{sec:dprop} then gives closer attention to this model for a specific scatter combination and studies the behavior of the ICS eigenvalues when the cluster proportions vary. Section~\ref{sec:aligned} explores the same scatter combination within a Gaussian mixture model consisting of three clusters with a group centers matrix of rank one while the maximum rank is two.
Through a simulation study and empirical application (Sections~\ref{sec:emp_study} and \ref{sec:application}), we validate our theoretical results and examine the behavior of 
previously unstudied scatter combinations in settings studied in the previous sections.
The paper concludes with Section~\ref{sec:conclusion}. Calculations and proofs are provided in the appendix while some details on the parameters for our experiments are provided in the supplementary material.

\section{What is already known about ICS in relation with the Fisher discriminant subspace}\label{sec:icsFDS}

\subsection{General principle of ICS}\label{subsec:ics}

Following the notations in \cite{tyler_invariant_2009}, let $\bo{Y}$ be a random vector of dimension $p$, with distribution function $F_{\bo{Y}}$. Let $\mathcal{P}_p$ be the set of all positive definite symmetric matrices of order $p$. A scatter matrix of $\bo{Y}$, denoted by $\bo{V}(F_{\bo{Y}})$, is a function of the distribution of $\bo{Y}$ that is uniquely defined at $F_{\bo{Y}}$, is such that $\bo{V}(F_{\bo{Y}})$ belongs to $\mathcal{P}_p$, and is affine equivariant, i.e., $\bo{V}(F_{\bo{AY} + b}) = \bo{A V}(F_{\bo{Y}}) \bo{A}^\top$ for all non-singular  matrices $\bo{A}$ of dimension $p \times p$ and for all $b \in \mathbb{R}^p$, and where $^\top$ denotes the transpose operation. In what follows, we will drop the dependence on $F_{\bo{Y}}$ and simply denote by $\bo{V}$ the scatter $\bo{V}(F_{\bo{Y}})$ when the context is obvious. 

If the data distribution is elliptical, then all affine equivariant scatter matrices are proportional \cite{nordhausen_cautionary_2015}. This property is not true in general, outside the context of elliptical distributions, and ICS exploits the difference between two scatter matrices to detect non-elliptical structures such as clusters \citep{alfons_tandem_2024} or outliers \citep{archimbaud_ics_2018}. 
To perform this comparison, ICS relies on the simultaneous diagonalization of two scatter matrices $\bo{V}_1$ and $\bo{V}_2$~: 
\[
\bo{H}^\top \bo{V}_1\bo{H} = \bo{D}_1 \quad \mbox{and} \quad
\bo{H}^\top \bo{V}_2\bo{H} = \bo{D}_2,
\]
where $\bo{D}_1$ and $\bo{D}_2$ are diagonal matrices such that $\bo{D}_1^{-1}\bo{D}_2 = \diag(\rho_1, \ldots, \rho_p)$, $\rho_1\geq \cdots \geq \rho_p$ being the eigenvalues of $\bo{V}_1^{-1}\bo{V}_2$ sorted in descending order, and $\bo{H} = (\bo{h}_1, \ldots, \bo{h}_p)$ is a $p\times p$ non singular matrix containing the corresponding eigenvectors. Usually $\bo{H}$ is scaled such that $\bo{D}_1 = \bo{I}_p$, where $\bo{I}_p$ denotes the identity matrix of dimension $p$.
The term ``generalized eigendecomposition" (and correspondingly, ``generalized eigenvalue" and ``generalized eigenvector") is sometimes used in this context, but for simplicity, we avoid this terminology in the present paper.

The $\bo{Z} = \bo{H}^\top \bo{Y}$ transformation of $\bo{Y}$ leads to new variables, that are invariant under affine transformation in the sense of Theorems 1 and 2 in \cite{tyler_invariant_2009}, and are called invariant coordinates or components.
Apart from being invariant by affine transformations, ICS has many applications and useful properties, as investigated for example in \cite{nordhausen_tools_2008,pena_eigenvectors_2010,nordhausen_multivariate_2011,loperfido_skewness_2013,alashwali_use_2016,loperfido_theoretical_2021,nordhausen_usage_2022,archimbaud_numerical_2023}. Of particular interest to us, is the following property for finite mixture models of the form: 

\begin{equation}\label{model0}
f_{\bo{Y}}(\bo{y})=\mbox{det}(\bs{\Gamma})^{-1/2} \sum_{j=1}^k \alpha_j g_j((\bo{y}-\bs{\mu}_j)^\top\bs{\Gamma}^{-1} (\bo{y}-\bs{\mu}_j)),
\end{equation}
where $\alpha_j>0$ for $j\in\{1,\ldots, k\}$, with $\sum_{j=1}^k\alpha_j=1$, are the mixture proportions, the $\bs{\mu}_j$ are distinct location vectors (also called group centers or group means), $\bs{\Gamma} \in \mathcal{P}_p $ is the within-group scatter matrix parameter, and $g_1,\ldots, g_k$ are non-negative functions. 
Thus, each of the $k$ components of the mixture is elliptical and the standard Gaussian mixture model is a special case.
For such a model, Theorem~4 from \cite{tyler_invariant_2009} is fundamental and justifies the use of ICS as a dimension reduction method when the objective is clustering or anomaly detection.
In short, this result proves that for model~\eqref{model0} and under some conditions, the ICS components associated with the largest and/or smallest eigenvalues span the FDS which is the one obtained by using the linear discriminant analysis in a supervized context. In the next subsection, we recall this theorem and also discuss some of its limitations.

 \subsection{ICS and the Fisher discriminant subspace}\label{subsec:fisher}

In a supervized context, when the distribution of $\bo{Y}$ is a mixture of distributions, \cite{fisher_use_1936} suggested to look for the linear function of the $p$ variables in $\bo{Y}$ which maximizes the ratio of the between-group variability to the within-group variability. This function corresponds to the projection of $\bo{Y}$ onto the eigenvector associated with the largest eigenvalue of the simultaneous decomposition of the between and within-group covariance matrices. In the case of a mixture of two Gaussian groups with means $\bs{\mu}_1$ and $\bs{\mu}_2$ and equal within-group covariance matrices $\bs{\Gamma}$, this eigenvector is equal to $\bs{\Gamma}^{-1}(\bs{\mu}_1-\bs{\mu}_2)$ and spans the FDS with dimension $q=k-1=1$. For a mixture of $k$ groups such as model~\eqref{model0}, the FDS is spanned by the vectors $\bs{\Gamma}^{-1}(\bs{\mu}_j-\bs{\mu}_k)$ for $j\in\{1,\ldots, k-1\}$. 

Let us denote by $q$ the dimension of the vector space associated with the affine space spanned by the group centers $\bs{\mu}_j$, $j\in\{1,\ldots, k\}$.
According to Theorem~4 from \cite{tyler_invariant_2009}, the simultaneous diagonalization of two scatter matrices results in at least one eigenvalue, denoted by $\rho*$, with a multiplicity greater than or equal to $p-q$. This eigenvalue is associated with an eigenspace of dimension at least $p-q$ which is the direct sum of the complementary space of the FDS (with dimension $p-q$) and, if the multiplicity of $\rho*$ is strictly larger than $p-q$, a space that is included in the FDS.  If no eigenvalue has multiplicity greater than $p-q$, the subspace spanned by the eigenvectors associated with the other eigenvalues than $\rho*$ (larger or smaller) is the FDS. 

Theorem~4 in \cite{tyler_invariant_2009} is proven by using the equivariance property of the scatter matrices $\bo{V}_1$ and $\bo{V}_2$, and transforming the random vector $\bo{Y}$, with distribution given by~\eqref{model0}, in the following way.
Let $ \bo{M} = (\bs{\mu}_1, \ldots, \bs{\mu}_k)$ be the matrix of rank~$q$ which contains the location vectors of $\bo{Y}$, and let $ \bo{M}_0 = \bs{\Gamma}^{-\frac{1}{2}}(\bo{M} - \bs{\mu}_k \bo{1}_k^\top) $, where $\bo{1}_k$ is a $k$-dimensional vector of ones.
The QR decomposition of $\bo{M}_0$ is:
$$
\bo{M}_0 = \bo{PT} = \bo{P} \begin{pmatrix} \bo{T}_u & \bo{0} \\ \bo{0} & \bo{0} \\ \end{pmatrix},
$$
where $\bo{P}$ is an orthogonal matrix, $\bo{T} = (\bo{t}_1, \ldots, \bo{t}_{k})$ with $\bo{t}_j$ a $p$-dimensional vector for $j\in\{1,\ldots, k\}$, and $\bo{T}_u$ is an upper triangular matrix of dimension $k-1\geq 1$ such that the last $k-1-q\geq 0$ rows are zero. Note that the dimension of $\bo{T}_u$ in \cite{tyler_invariant_2009} differs slightly from ours, but this does not impact the remainder of the proof.
The distribution of the transformed random vector $\bo{X} = \bo{P}^\top \bs{\Gamma}^{-\frac{1}{2}}(\bo{Y} - \bs{\mu}_k \bo{1}_k^\top)$ is a mixture of spherical distributions with density:
\begin{equation}\label{modelt}
f_{\bo{X}}(\bo{x})= \sum_{j=1}^k \alpha_j g_j((\bo{x}-\bo{t}_j)^\top (\bo{x}-\bo{t}_j)),
\end{equation}
where for $j\in\{1,\ldots, k\}$, $\alpha_j>0$, $\sum_{j=1}^k\alpha_j=1$,  the $\bo{t}_j$ are distinct, $\bo{t}_k$ is the zero vector, and $g_1,\ldots, g_k$ are non-negative functions. By decomposing the scatter matrices $\bo{V}_1(F_{\bo{X}})$ and $\bo{V}_2(F_{\bo{X}})$ in four blocks where the left top block has dimension $q \times q$,  and using the affine equivariance of the scatter matrices, \cite{tyler_invariant_2009} are able to write:
$$\bo{V}_1(F_{\bo{X}})^{-1} \bo{V}_2(F_{\bo{X}}) = \left(\begin{array}{cc}
\bo{A}_q &  \bo{0}\\
\bo{0}     & \gamma \bo{I}_{p-q} 
\end{array}\right),$$
for some $\gamma >0$. This result implies that $\bo{V}_1(F_{\bo{X}})^{-1} \bo{V}_2(F_{\bo{X}})$ has an eigenvalue equal to $\gamma$ with multiplicity at least equal to $p-q$. The associated eigenspace can be written as the direct sum of the complementary space of the FDS (with dimension $p-q$) and, if the multiplicity of $\gamma$ is strictly larger than $p-q$, a space that is included in the FDS. The result is then derived for $\bo{Y}$ by noting that the eigenvalues of $\bo{V}_1(F_{\bo{Y}})^{-1} \bo{V}_2(F_{\bo{Y}})$ are proportional to the ones of $\bo{V}_1(F_{\bo{X}})^{-1} \bo{V}_2(F_{\bo{X}})$. Thus there exists an eigenvalue $\rho*$ of $\bo{V}_1(F_{\bo{Y}})^{-1} \bo{V}_2(F_{\bo{Y}})$ with multiplicity at least $p-q$. Moreover, if the multiplicity of $\rho*$ is $p-q$, it is proven that the eigenvectors of $\bo{V}_1(F_{\bo{Y}})^{-1} \bo{V}_2(F_{\bo{Y}})$ which are not associated with $\rho*$ span the same subspace that is spanned by $\bs{\Gamma}^{-1}(\bo{M}- \bs{\mu}_k \bo{1}_k^\top)$ and which corresponds to the FDS.

This result holds for model~\eqref{model0} and for any pair of scatter matrices, making it very general. The problem, however, is that in order to perform dimension reduction with ICS, the eigenvalues associated with the FDS should be distinct from $\rho*$ meaning that the multiplicity of $\rho*$ should not be greater than $p-q$. In \cite{tyler_invariant_2009}, the authors mention that this condition generally holds except for special cases, and that these special cases depend on the scatter pair and on the model parameters. In the following subsection, we recall some special cases for which the behavior of ICS is understood more precisely.

\subsection{Mixture of two Gaussian groups with different centers and other special cases}\label{subsec:2gr}

If $k=2$, model~\eqref{model0} is a mixture of two elliptical distributions and $q=1$. Thus the multiplicity of $\rho*$ greater than $p-1$ corresponds to a multiplicity equal to $p$ and $\bo{V}_1^{-1} \bo{V}_2$ is proportional to the identity. In this case, it is not possible to distinguish the Fisher discriminant direction from the others. 
In order to better understand the conditions under which this situation occurs, we need to specify the scatter pair and the elliptical distributions in the mixture~\eqref{model0}. The case of a mixture of two Gaussian distributions has been further explored in \cite{archimbaud_ics_2018} in the context of anomaly detection, where the authors recommend the use of the covariance matrix, denoted $\cov$, for $\bo{V}_1$, and the matrix based on fourth-order moments, denoted $\cov_4$, for $\bo{V}_2$:
$$\cov=\E[(\bo{Y} - \E(\bo{Y}))(\bo{Y} - \E(\bo{Y}))^\top], \quad \cov_4=\frac{1}{p+2} \E[d^2(\bo{Y})\, (\bo{Y} - \E(\bo{Y}))(\bo{Y} - \E(\bo{Y}))^\top],$$
where $d^2(\bo{Y})$ is the square of the Mahalanobis distance:
$$d^2(\bo{Y})={(\bo{Y} - \E(\bo{Y}))^\top \cov^{-1}(\bo{Y} - \E(\bo{Y}))}.$$
For the scatter pair combination $\cov-\cov_4$, also known as FOBI \cite{cardoso_source_1989,nordhausen_overview_2019}, $\gamma=1$ and the case where all eigenvalues of ICS are equal to one occurs when one of the groups has a proportion exactly equal to $(3 - \sqrt{3})/6$, i.e., approximately 21\% \citep{tyler_invariant_2009}. If one group has a proportion below this threshold, there is one eigenvalue that is strictly greater than the others, which are all equal to one. Conversely, if both 
groups have proportions above this threshold, there will be an eigenvalue strictly lower than the others, which are all equal to one.
For outlier detection, it makes sense to assume that the proportion of outliers is less than 21\% and thus, selecting the first invariant component is sufficient as proposed in \cite{archimbaud_ics_2018}.
However, limiting the theoretical study to two groups only is restrictive and we aim at finding more general results.

In the appendix of Chapter 2 in \cite{archimbaud_methodes_2018}, the eigenvalues of $\cov^{-1} \cov_4$ are computed for a mixture of three Gaussian distributions: one group with zero mean, and two groups with opposite non-zero mean vectors. The two non-zero group means have the same proportion in the mixture. Due to the symmetry of the mixture, the dimension of the FDS restricts to $q=1$. Corollary 2 in~\cite{archimbaud_methodes_2018} addresses the case where all three groups have a covariance matrix equal to the identity. It states that the largest (or smallest) eigenvalue of ICS corresponds to the FDS if and only if the zero-mean group has a proportion greater than (or less than) $1/3$. In the special case where all groups have equal proportions ($1/3$ each), $\cov^{-1} \cov_4$ becomes the identity matrix, and consequently, ICS fails to detect the group structure. While this three-group case extends beyond the two-group case, the assumption of symmetric groups is highly restrictive. Our goal is to develop more general results.

Before going further into the behavior of ICS for mixture models~\eqref{modelt} with more than two groups, we point out that ICS can also make it possible to find the FDS in mixture models other than~\eqref{modelt}. The so-called ``barrow wheel'' distribution introduced in \cite{hampel_robust_2011} \citep[see also the discussion by Stahel and M\" achler in][]{tyler_invariant_2009} is the following two groups mixture distribution:
$$\left(1-\frac{1}{p}\right){\cal N}\left(\bo{0}_p,\mbox{diag}(\sigma_{11}^2,1,\ldots,1\right)+\frac{1}{p}G,$$
where $p$ is the dimension and
$G$ is such that if $\bo{Y}=(Y_1,\ldots, Y_p)^\top$ is distributed according to $G$, $Y_1^2$ follows a chi-square distribution with $\nu$ degrees of freedom, and is independent of $Y_2$, \ldots, $Y_p$, while $(Y_2,\ldots,Y_p)^\top$ follows a $p-1$ dimensional Gaussian distribution with zero mean and covariance matrix equal to $\sigma_{22}^2 \bo{I}_{p-1}$. As detailed on page 43 of \cite{tyler_invariant_2009}, the result of Theorem~4 still applies to the barrow wheel distribution, for any scatter pair matrices (as soon as they are not proportional). 
Moreover, Proposition 4 in the appendix of Chapter 2 of \cite{archimbaud_methodes_2018} gives the eigenvalues of $\cov^{-1} \cov_4$ as functions of the dimension $p$, the group proportions, and  the parameters $\nu$, $\sigma_{11}^2$ and $\sigma_{22}^2$. 

\subsection{Going further with the scatter pair \texorpdfstring{$\cov-\cov_4$}{} and the Gaussian mixture}\label{subsec:cc4calc}

Considering the scatter pair  $\cov-\cov_4$, it is possible to make calculations for model \eqref{modelt} with Gaussian densities. 
The $\cov$ and $\cov_4$ matrices can be expressed as follows (see Appendix A.1 for details):
\begin{align*}
\cov &= \begin{bmatrix} \bs{\beta} & \bo{0} \\ \bo{0} & \bo{I}_{p-q} \end{bmatrix}, &
\cov^{-1} &= \begin{bmatrix} \bo{B} & \bo{0} \\ \bo{0} & \bo{I}_{p-q} \end{bmatrix}, &
\cov_4 &= \begin{bmatrix} \bs{\Psi} & \bo{0} \\ \bo{0} & \bo{I}_{p-q} \end{bmatrix},
\end{align*}
where $\bs{\beta}=(\beta_{ms})$, $\bs{\Psi} = (\psi_{ms})$ and $\bo{B} = (b_{ms})$ are $q \times q$ matrices, and the product of the two scatter matrices is then:
\begin{equation}\label{covcov4}
    \cov^{-1}\cov_4 = \begin{bmatrix} \bo{B}\bs{\Psi} & \bo{0} \\ \bo{0} & \bo{I}_{p-q} \end{bmatrix}\, .
\end{equation} These computations illustrate Theorem~4 from \cite{tyler_invariant_2009} since we get a block diagonal matrix for $\cov^{-1}\cov_4$ with the identity of dimension $p-q$ as the lower diagonal block, meaning that $\cov^{-1}\cov_4$ has an eigenvalue equal to one with multiplicity at least $p-q$ and the associated eigenspace is in direct sum with the FDS which is the space spanned by the columns of $\bo{T}$ (the group centers matrix).
We go further in Appendix A.1 and derive the terms of matrix $\bs{\Psi}$:
\begin{equation}\label{eq:amn}
    \psi_{ms} =\frac{1}{p+2}\Bigg[\sum_{i=1}^{q}\sum_{j=1}^{q} b_{ij}\E[x^{\mbox{c}}_m x^{\mbox{c}}_s x^{\mbox{c}}_i x^{\mbox{c}}_j] +(p-q) \E[x^{\mbox{c}}_m x^{\mbox{c}}_s]\Bigg],
\end{equation}
for $m,s \in\{1, \ldots, q\}$, with $(x^{\mbox{c}}_1,\ldots,x^{\mbox{c}}_p)^\top=\bo{X}-\E(\bo{X})$. The expectations in $\psi_{ms}$ and the expressions of $\beta_{ms}$ are also provided in Appendix A.1. However, when using~\eqref{covcov4} and \eqref{eq:amn} to compute   $\cov^{-1}\cov_4$, we do not have easy expressions for $\cov^{-1}$ and we propose to compute $\cov^{-1}\cov_4$ and its eigenvalues numerically for particular mixture models~\eqref{modelt}. 

We set the value of $p$ to 6 and the four panels of Fig.~\ref{fig:boxplot_gaussian} correspond to different values of $k$ and $q$. On each panel, we make the group proportions vary on the $x$-axis. 
We consider 20 distinct configurations for the group means. The values are integers from -2000 to 15000, selected to cover a wide range of configurations (see Table S1 and the details in Supplementary Section S1). We then draw boxplots of the eigenvalues of $\cov^{-1}\cov_4$ on Fig.~\ref{fig:boxplot_gaussian}. On most panels, there are $p-q$ eigenvalues equal to one, which is consistent with the block $\bo{I}_{p-q}$ in expression \ref{covcov4}. However, the eigenvalues of the block $\bo{B}\bs{\Psi}$ are more difficult to analyze, and vary with the number of groups $k$, the group centers and the dimension $q$ of the FDS. For the two-group case, we observe the following known results. When no group proportion is less than 21\%, the last eigenvalue is the only eigenvalue distinct from and smaller than one. When one group has a proportion of approximately 21\%, all eigenvalues are equal or nearly equal to one. In the case of a group with a proportion of less than 21\%, only the first eigenvalue differs and is larger than one. 

The number of eigenvalues larger (resp. smaller) than one seems to depend on the group proportions with some variability of the eigenvalues depending on the group means. In all panels, when the groups are balanced and have proportions equal to $1/k$, 
we observe $q$ eigenvalues less than one.
When the group proportions are unbalanced, implying that some groups have proportions of less than $1/k$, we observe that some eigenvalues can be larger than one.
For the two-group case, we know that the value 21\% is a threshold for the group proportions. A change in one of the two group proportions from a value larger than 21\% to a value lower than 21\% leads to an ICS eigenvalue shifting from smaller than one to larger than one. For more than two groups, one question arises. Is there a threshold value, depending on the number of groups, group centers and group proportions, such that, when groups are unbalanced, changing a group proportion from a value larger than the threshold to a value lower than the threshold leads to a change of an eigenvalue smaller than one to an eigenvalue larger than one? The question is not easy to answer.
\begin{figure}[htbp]
    \centering
    \includegraphics[width=1\textwidth]{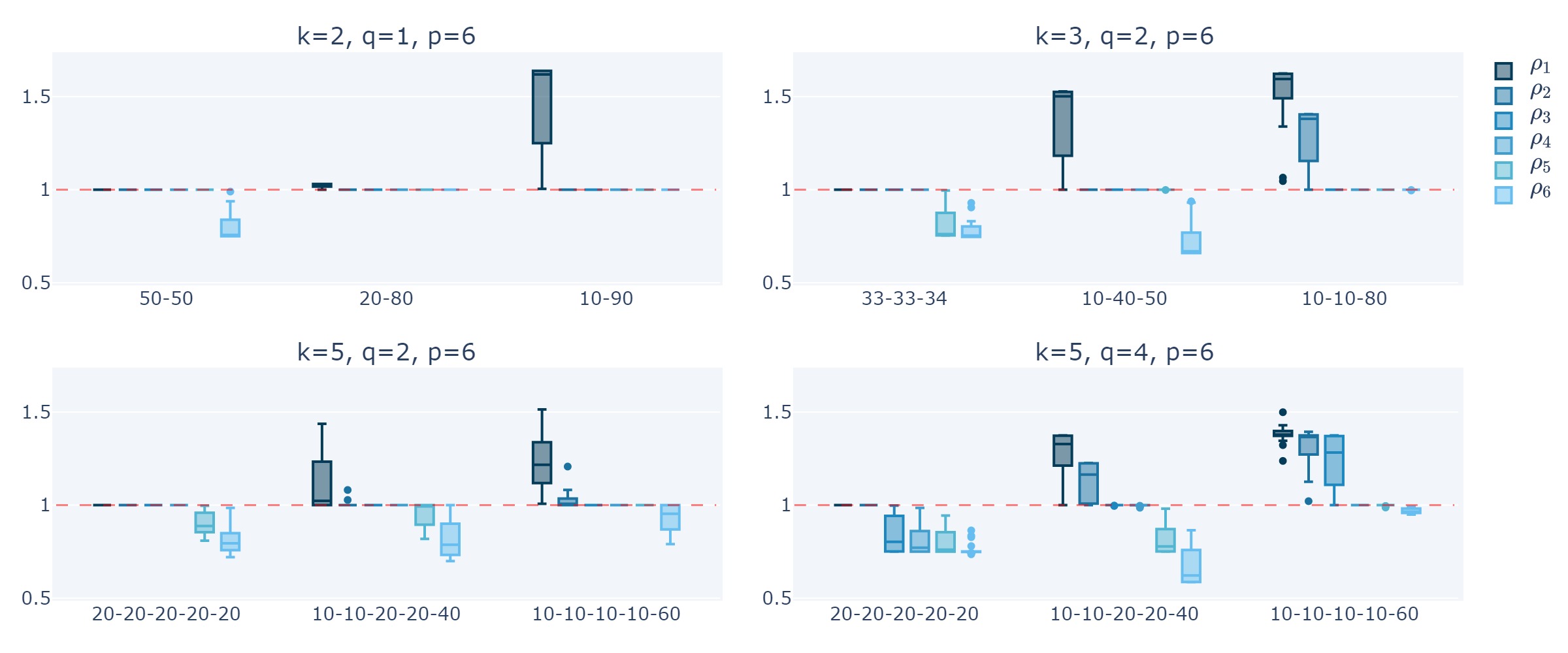}
    \caption{Boxplots of the eigenvalues of $\cov^{-1}\cov_4$, where the group centers vary across 20 different configurations, with $p=6$, and 12 different group proportions scenarios ($\alpha_j$, $j\in\{1,\ldots, k\}$) on the $x$-axis. The values of $k$ and $q$ vary across the panels.}
    \label{fig:boxplot_gaussian}
\end{figure}

Indeed, the experiment illustrates that the eigenvalues of ICS exhibit a complicated dependence on the mixture parameters, requiring a simplified framework to discover interesting results. 
Theorem~4 in \cite{tyler_invariant_2009} ensures that for the elliptical mixture \eqref{model0}, at least $p-q$ eigenvalues of the simultaneous diagonalization of two affine equivariant scatter matrices are equal and are not associated with FDS. With this result established, we now examine the remaining eigenvalues. To simplify the analysis of their behavior, we eliminate the dimensions not associated with FDS, meaning that we focus on the case where $p$ equals $q$. Using a dimension reduction method is not interesting in practice if $p=q$ but the objective here is only to simplify the calculations. Even in this context, it remains difficult to understand the behavior of the ICS eigenvalues associated with the FDS. We therefore further restrict our study to the case where there is no within-group variability. This means the covariance matrix equals the between-group covariance matrix, and the distribution of $\bo{Y}$ is simply a mixture of $k$ Dirac distributions in $p$ dimensions. Finally, in the next two sections, we also focus on the case where $q=k-1$, meaning that the group centers $q \times k$ matrix $\bo{M}$ is of maximum rank $q=k-1$. As explained in the following section, this particular case with $p=q=k-1$ and no within-group variability yields a remarkable result: the ICS eigenvalues depend solely on the proportions of the components of the mixture, and not on the group centers. This property no longer holds when $q$ is less than $k-1$. The latter case is more complex and will be examined only for $k=3$ and $q=1$ in Section \ref{sec:aligned}.

\section{Study of \texorpdfstring{$\bo{V}_1^{-1}\bo{V}_2$}{} eigenvalues when \texorpdfstring{$p=q$}{} with varying group centers, varying group proportions and no within-group variability}\label{sec:iraccen}

Let us consider a mixture of Dirac distributions in $p=q$ dimensions:  $\bo{Y} \sim \sum_{j=1}^{k}\alpha_j \delta_{\bs{\mu}_j}$ with distinct group centers $\bs{\mu}_j$, and  proportions $\alpha_j>0$  such that $\sum_{j=1}^k\alpha_j=1$, for $j\in\{1,\ldots, k\}$. Following \cite{tyler_invariant_2009}, given the affine invariance property of ICS, we can simplify this model and consider the following mixture model:
\begin{equation}\label{modeldc}
    \bo{X} \sim \sum_{j=1}^{k}\alpha_j \delta_{\bo{t}_j},
\end{equation}
with  $\bo{t}_k$ being the $q$-dimensional zero vector, and with distinct group centers $\bo{t}_j$ for $j\in\{1,\ldots, k\}$. This model corresponds in some sense to model~\eqref{modelt} with $p=q$ and after removing the within-group variability. This is an extreme situation but it will give us indications on the behavior of the eigenvalues of ICS when the noise (measured by the within-group covariance matrix) is ``small'' compared to the signal (measured by the between-group covariance matrix). Even if the  Dirac measures do not have probability density functions, it is possible to compute some scatter matrices such as $\cov$ and $\cov_4$, and simultaneously diagonalize them.
Note however that since the between-group covariance is equal to the total covariance matrix for model \eqref{modeldc}, linear discriminant analysis consists in diagonalizing the identity matrix which is of no interest.

\subsection{General theoretical result for \texorpdfstring{$q=k-1$}{} \label{subsec:dcgen}}

For model~\eqref{modeldc} and under the assumption that the group centers matrix is of full rank $q=k-1$, we can derive the following proposition that states that the eigenvalues of ICS do not depend on the group centers. 

\begin{proposition}\label{prop:eigen}
For the mixture model~\eqref{modeldc} with full rank group centers matrix, and for any pair of affine equivariant scatter matrices $\bo{V}_1(F_{\bo{X}})-\bo{V}_2(F_{\bo{X}})$ that exist at $F_{\bo{X}}$, the eigenvalues of ICS, i.e., of $\bo{V}_1(F_{\bo{X}})^{-1}\bo{V}_2(F_{\bo{X}})$, do not depend on the group centers matrix but only on the proportions $\alpha_j$, $j\in\{1,\ldots, k\}$ of the mixture components. 
\end{proposition}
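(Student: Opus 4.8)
The plan is to leverage the affine equivariance of the scatter pair together with the full-rank hypothesis to show that any two center configurations sharing the same proportions are related by a single nonsingular linear map; this map then acts as a similarity transformation on $\bo{V}_1^{-1}\bo{V}_2$, leaving its eigenvalues unchanged. Since the claim is that the eigenvalues are a function of the proportions alone, it suffices to prove that two arbitrary full-rank configurations with identical proportions produce the same spectrum.

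First I would fix the proportions $\alpha_1,\ldots,\alpha_k$ and take two configurations of distinct centers, $\bo{t}_1,\ldots,\bo{t}_k$ and $\bo{t}_1',\ldots,\bo{t}_k'$, both normalized so that $\bo{t}_k=\bo{t}_k'=\bo{0}$ and both yielding a full-rank group centers matrix. Because $q=k-1$ and the last center is the origin, full rank means precisely that the $q\times q$ matrices $(\bo{t}_1,\ldots,\bo{t}_{k-1})$ and $(\bo{t}_1',\ldots,\bo{t}_{k-1}')$ are invertible. I would then set the nonsingular matrix $\bo{A}=(\bo{t}_1',\ldots,\bo{t}_{k-1}')(\bo{t}_1,\ldots,\bo{t}_{k-1})^{-1}$, which satisfies $\bo{A}\bo{t}_j=\bo{t}_j'$ for every $j\in\{1,\ldots,k-1\}$ and trivially $\bo{A}\bo{t}_k=\bo{0}=\bo{t}_k'$. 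Hence, if $\bo{X}$ follows model~\eqref{modeldc} with centers $\bo{t}_j$, then $\bo{A}\bo{X}$ has exactly the distribution of the second model, i.e. $F_{\bo{A}\bo{X}}=F_{\bo{X}'}$.

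Next I would invoke affine equivariance: since $\bo{V}_1$ and $\bo{V}_2$ exist at $F_{\bo{X}}$, equivariance gives $\bo{V}_i(F_{\bo{X}'})=\bo{V}_i(F_{\bo{A}\bo{X}})=\bo{A}\,\bo{V}_i(F_{\bo{X}})\,\bo{A}^\top$ for $i=1,2$, which in particular guarantees that both scatters exist at $F_{\bo{X}'}$. Forming the product and cancelling the interior factor $\bo{A}^{-1}\bo{A}=\bo{I}_q$ then yields
\[
\bo{V}_1(F_{\bo{X}'})^{-1}\bo{V}_2(F_{\bo{X}'})=\bo{A}^{-\top}\,\bo{V}_1(F_{\bo{X}})^{-1}\bo{V}_2(F_{\bo{X}})\,\bo{A}^{\top},
\]
so the two product matrices are similar via $\bo{A}^{\top}$. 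Because similar matrices have identical spectra, the ICS eigenvalues of the two configurations coincide. As the two full-rank configurations were arbitrary and the construction never referred to the specific center values, the eigenvalues cannot depend on the centers and are therefore determined by the proportions alone.

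The only genuinely delicate point is the construction of $\bo{A}$, and this is exactly where the assumptions $q=k-1$ and full rank are indispensable: with the origin fixed as $\bo{t}_k$, the remaining $k-1$ centers form a basis of $\mathbb{R}^q$, so a single linear isomorphism can carry them onto any other such basis. If instead $q<k-1$, there would be more nonzero centers than dimensions, the map $\bo{A}$ would be overdetermined and generically fail to exist, and the similarity argument would collapse. This is precisely why the statement is confined to the full-rank regime and why, as noted in the text, the property no longer holds when $q<k-1$.
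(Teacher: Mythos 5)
Your proof is correct and follows essentially the same route as the paper: both construct the nonsingular matrix $\bo{A}=\bo{\tilde{T}}_{k-1}\bo{T}_{k-1}^{-1}$ carrying one full-rank center configuration onto the other, then use affine equivariance to obtain $\bo{V}_1(F_{\bo{X}'})^{-1}\bo{V}_2(F_{\bo{X}'})=(\bo{A}^\top)^{-1}\bo{V}_1(F_{\bo{X}})^{-1}\bo{V}_2(F_{\bo{X}})\bo{A}^\top$ and conclude by similarity. Your closing remark on why the argument breaks down when $q<k-1$ is a helpful addition but does not change the substance.
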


\begin{proof}[\textbf{\upshape Proof:}]
Let us consider two mixtures of the form~\eqref{modeldc} with the same component proportions $\alpha_j$ for $j\in\{1,\ldots, k\}$, but with different $(k-1)\times k$  full rank group center matrices $\bo{T}= (\bo{t}_1, \ldots, \bo{t}_{k})$ and $\bo{\tilde{T}}= (\bo{\tilde{t}}_1, \ldots, \bo{\tilde{t}}_{k})$ respectively, where $\bo{t}_k$ and $\bo{\tilde{t}}_k$ are zero vectors. Let $\bo{T}_{k-1}$ (resp.  $\bo{\tilde{T}}_{k-1}$) be the $(k-1)\times (k-1)$ matrix such that $\bo{T}=(\bo{T}_{k-1},\bo{0})$ (resp. $\bo{\tilde{T}}=(\bo{\tilde{T}}_{k-1},\bo{0})$). The matrices $\bo{T}_{k-1}$ and $\bo{\tilde{T}}_{k-1}$ are square matrices with rank $k-1$ and are thus invertible. We can write $\bo{\tilde{T}}_{k-1}=\bo{A} \bo{T}_{k-1}$ where $\bo{A}=\bo{\tilde{T}}_{k-1}  \bo{T}_{k-1}^{-1}$ is a $(k-1)\times(k-1)$ non-singular matrix and we also have that $\bo{\tilde{T}}=\bo{A}\bo{T}$. Let $\bo{X}$ be a $(k-1)$-dimensional random vector following the mixture~\eqref{modeldc} with group centers matrix $\bo{T}$. The random vector $\bo{Y}=\bo{AX}$ follows the mixture~\eqref{modeldc} with the same proportions $\alpha_i$, $i\in\{1,\ldots, k\}$ and the group centers matrix $\bo{\tilde{T}}$. Using the affine equivariance property of the scatter matrices we have that $\bo{V}_1(F_{\bo{Y}})^{-1}\bo{V}_2(F_{\bo{Y}})=(\bo{A}^\top)^{-1}\bo{V}_1(F_{\bo{X}})^{-1}\bo{V}_2(F_{\bo{X}})\bo{A}^\top$ and has the same eigenvalues as $\bo{V}_1(F_{\bo{X}})^{-1}\bo{V}_2(F_{\bo{X}})$. 
\end{proof}

This result is a kind of generalization of the result for the two-group case detailed in Subsection \ref{subsec:2gr}. For two groups with distinct group centers, we have $q=1$ and thus the assumption $q=k-1$ always holds. The result in Subsection \ref{subsec:2gr} however differs from Proposition \ref{prop:eigen} in several aspects. The result for the two-group case focuses on a Gaussian mixture and accommodates some within-group variability but applies only to the $\cov-\cov_4$ scatter pair. 
In the next subsection, we investigate numerically the ICS eigenvalues for $\cov-\cov_4$ for model~\eqref{modeldc} when $p=q=k-1$.

\subsection{Numerical results for \texorpdfstring{$\cov-\cov_4$}{}}\label{subsec:dcpart}

For model~\eqref{modeldc} with $k=2$, we have $p=q=1$ and it is possible to derive the exact expression of $\cov^{-1}\cov_4$, which is a positive number. More precisely, we get after some simple computation, $\cov=\alpha_1\alpha_2(t_1-t_2)^2$ and $ \cov_4=(\alpha_1^3+\alpha_2^3) (t_1-t_2)^2 /3$ so that
$\cov^{-1}\cov_4=(\alpha_1^3+\alpha_2^3)/(3\alpha_1\alpha_2)$ does not depend on $t_1$ and $t_2$. We then recover a result similar to the one at the beginning of Subsection \ref{subsec:2gr} for $\cov-\cov_4$ and a mixture of two Gaussian distributions: the eigenvalue associated with the FDS is equal to one (which is the value of the other eigenvalues when $p>q$) if and only if one of the two-group proportion is equal to $(3-\sqrt{3})/6$.  More specifically, using the fact that $\alpha_2=1-\alpha_1$ and fixing $\alpha_1 \leq\alpha_2$, $\cov^{-1}\cov_4$ is equal to one (respectively larger or smaller than one) 
if and only if $\alpha_1=(3-\sqrt{3})/6$ (respectively $\alpha_1$ smaller or larger than $(3-\sqrt{3})/6$).

When $k>2$, the eigenvalues of $\cov^{-1}\cov_4$ cannot be easily derived explicitly and numerical computations are necessary. By using the expressions derived in Appendix A.2, alongside numerical computations for the inverse of $\cov$, the eigenvalues of $\cov^{-1}\cov_4$ can be calculated. These calculations are performed for the 20 specified group centers and across 18 different scenarios of group proportions (see Appendix A.2 and Supplementary Section S2 for details). 
Boxplots in Fig.~\ref{fig:boxplot_dirac_q_eq} display the outcomes when $p=q=k-1$, for $k\in\{2, 3, 5, 10\}$ groups (as in Fig.~\ref{fig:boxplot_gaussian}). The boxplots consist of a single line for each scenario, indicating that varying the group means does not impact the eigenvalues when $q=k-1$. However, the values of the eigenvalues differ from one scenario to another. Therefore, it can be concluded that the group proportions do have an impact on the eigenvalues. A more detailed examination of this impact will be presented in the next section.
When $q<k-1$, the results are different, as illustrated in Fig.~\ref{fig:boxplot_dirac_q_lt_1} for $k=3$ and $q=1$, and $k=5$ and $q \in \{1, 2, 3\}$ (see also Fig.~S1 in Supplementary Section~S2 with $k=10$ and $q\in\{1,3,5,7\}$). For each scenario, the eigenvalues exhibit quite large variability across the different group centers. This variability is such that boxplots cross red horizontal dotted line corresponding to the value one. Unlike the case where $q=k-1$, we cannot draw any conclusions about the conditions which cause the eigenvalues to be larger or smaller than one. The influence of group proportions on eigenvalues persists, but the influence of the group centers is also crucial. 
\begin{figure}[htbp]
\centering
\includegraphics[width=1\textwidth]{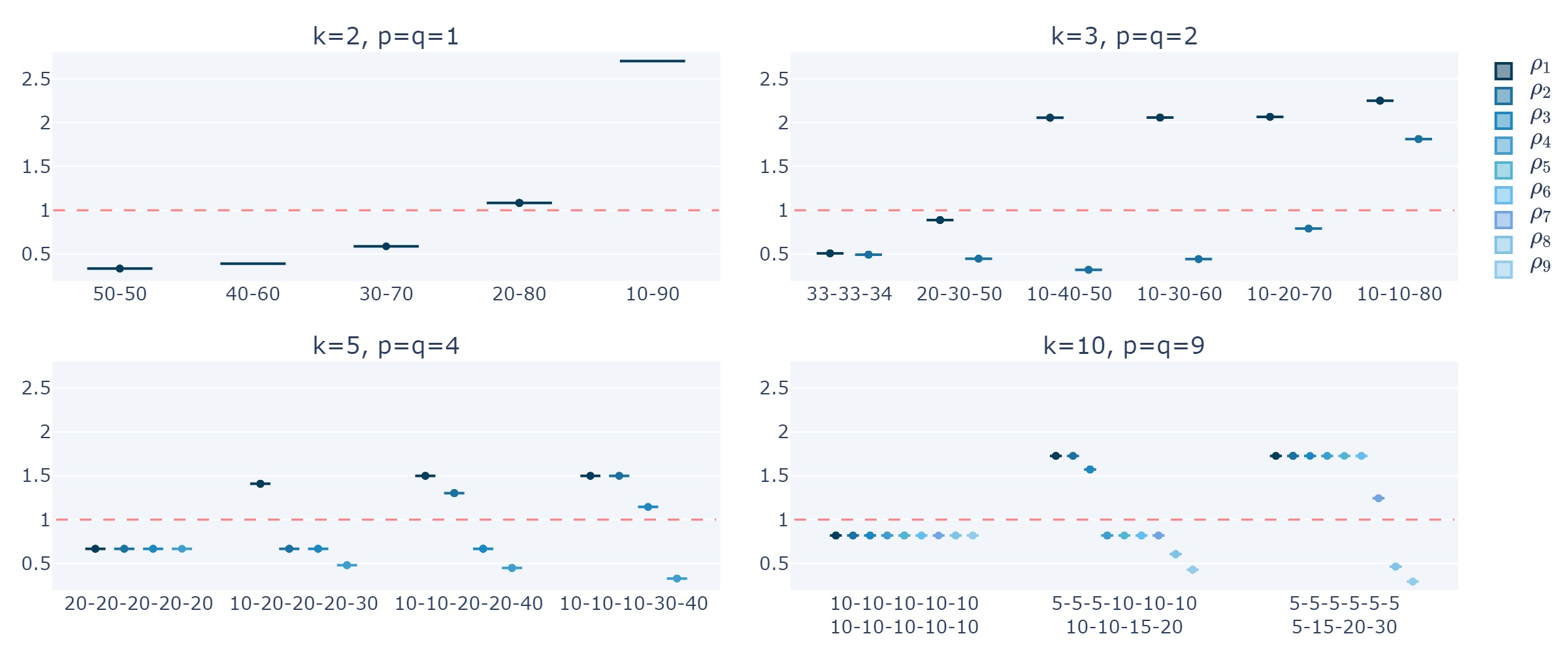}
\caption{Boxplots of the eigenvalues of $\cov^{-1}\cov_4$ for $p = q = k-1$ and with no within-group variability, when the group centers vary across 20 different configurations. The values of $k$ vary across the panels and the $x$-axis of each panel corresponds to 18 group proportions scenarios ($\alpha_j$,~$j\in\{1,\ldots, k\}$). }
\label{fig:boxplot_dirac_q_eq}
\end{figure}
\begin{figure}[htbp]
\centering
\includegraphics[width=1\textwidth]{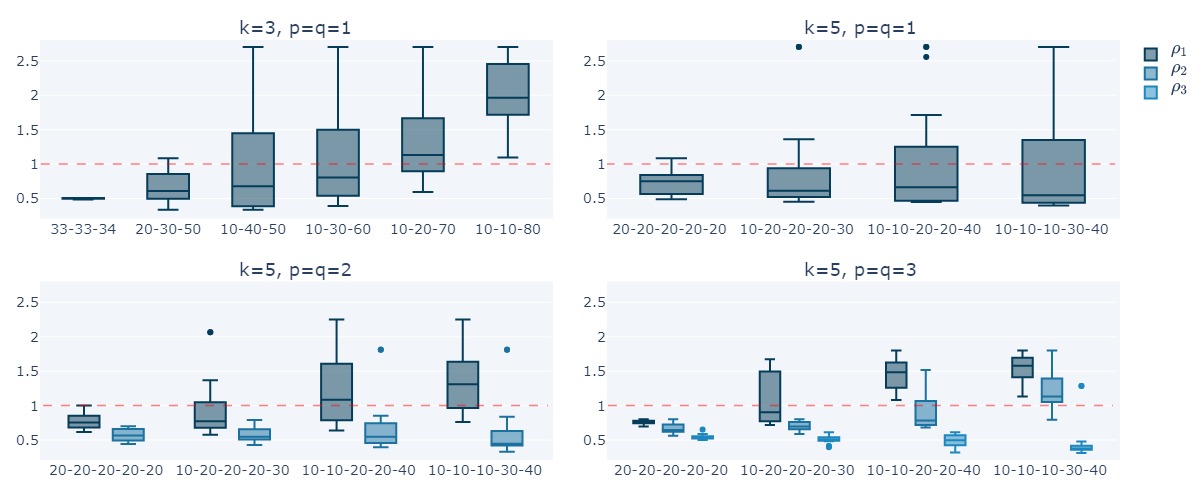}
\caption{Boxplots of the eigenvalues of $\cov^{-1}\cov_4$ 
for $p = q < k-1$ and with no within-group variability, when the group centers vary across 20 different configurations. The values of $k$ and $q$ vary across the panels and the $x$-axis of each panel corresponds to 18 group proportions scenarios ($\alpha_j$, $j\in\{1,\ldots, k\}$).}
\label{fig:boxplot_dirac_q_lt_1}
\end{figure}

\section{Study of \texorpdfstring{$\cov-\cov_4$}{} eigenvalues for  \texorpdfstring{$q=k-1$}{} with fixed group centers, varying group proportions, and no within-group variability.} \label{sec:dprop}

As stated in Section \ref{sec:iraccen}, in the context of a mixture of Dirac distributions with $q=k-1$, only the group proportions influence the eigenvalues of $\cov^{-1}\cov_4$. In this section, the group means are therefore fixed to the first row of the group centers of Table S1 of Supplementary Section S1. The relationship between the group proportions and the eigenvalues is further investigated in the same context as in the previous subsection. 
Theoretical calculations do not permit to measure precisely the impact of group proportions on the eigenvalues of $ \cov^{-1}\cov_4$. Therefore, we vary the group proportions on a grid and calculate numerically the corresponding eigenvalues to better understand their behavior. The three-group case is described in Subsection \ref{subsec:dprop3}, while Subsection \ref{subsec:dpropk} generalises to $k$ groups. 

\subsection{The case of three groups}\label{subsec:dprop3}

One of the most advantageous aspects of studying three groups is the ability to represent the proportions of each group in a ternary diagram, as in Fig.~\ref{fig:tern_grad}. This diagram \citep[see][]{pawlowskyglahn_modeling_2015} displays the values of three positive variables that sum up to 100\%, and can be applied for the three-group proportions. Each point on the ternary diagram represents a scenario of group proportions, and the color of the point is determined by the eigenvalues obtained for that scenario.
As $k=3$, we have $q=2$ and we focus on the case $p=2$ in order to avoid irrelevant dimensions associated with eigenvalues equal to one. $\cov^{-1}\cov_4$ has thus two eigenvalues  $\rho_1 \geq \rho_2$ and two ternary diagrams are plotted: Fig.~\ref{fig:tern_1} for $\rho_1$, and Fig.~\ref{fig:tern_2} for $\rho_2$.

\begin{figure}[htbp]
    \centering
    \begin{subfigure}[b]{0.49\textwidth}
        \centering
        \includegraphics[width=\textwidth]{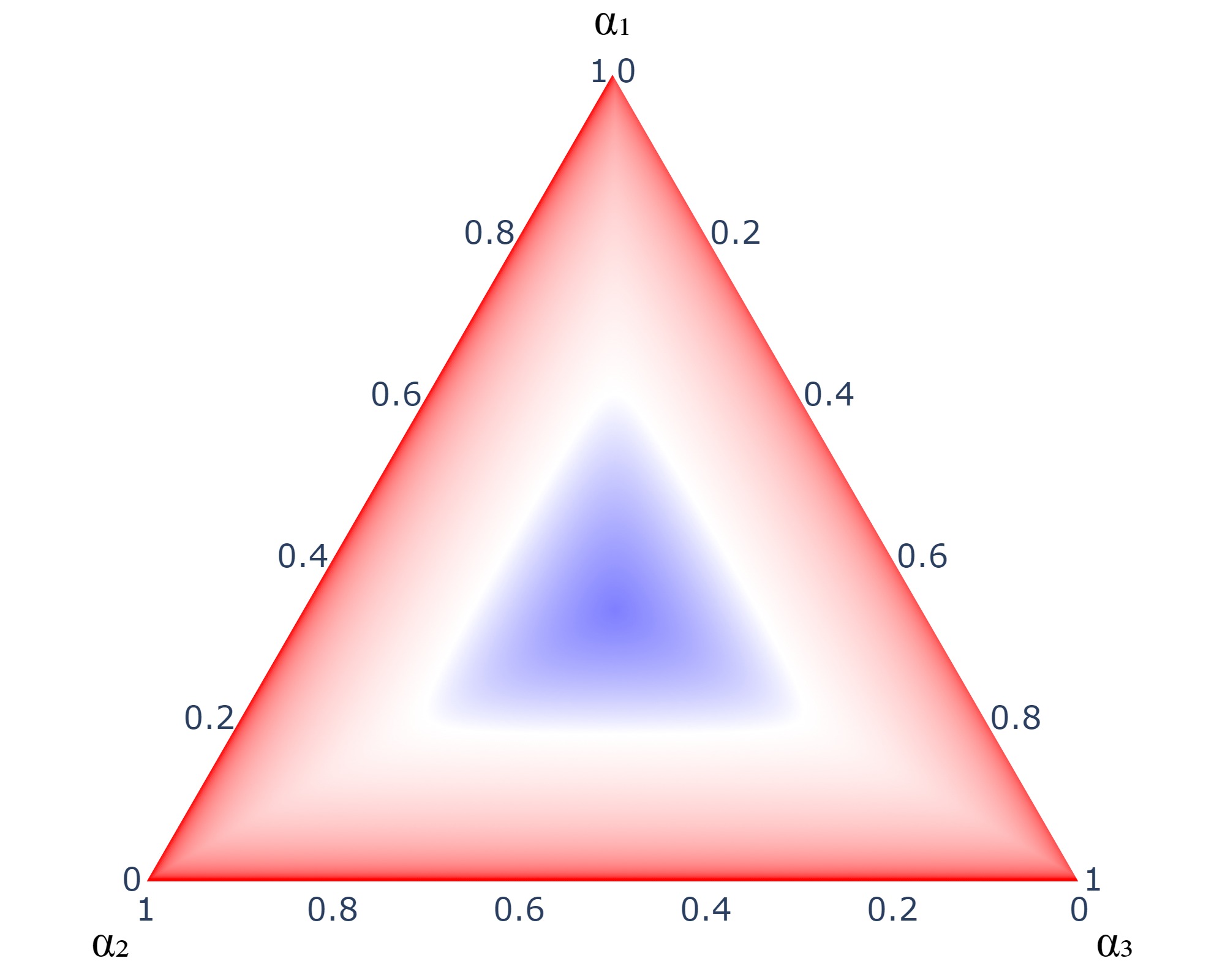}
        \caption{$\rho_1$}
        \label{fig:tern_1}
    \end{subfigure}
    \hfill
    \begin{subfigure}[b]{0.49\textwidth}
        \centering
        \includegraphics[width=\textwidth]{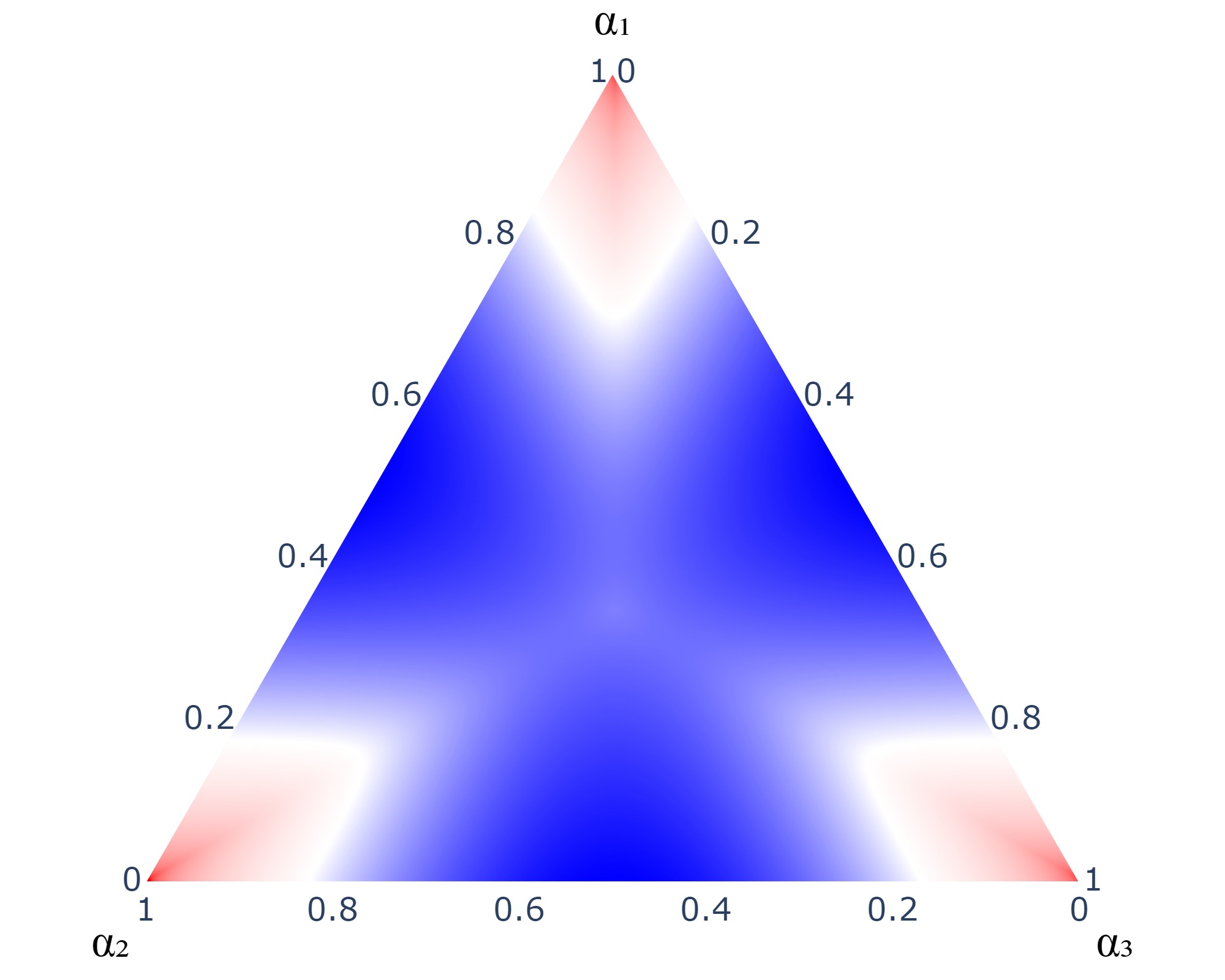}
        \caption{$\rho_2$}
        \label{fig:tern_2}
    \end{subfigure}
    \vskip\baselineskip
    \begin{subfigure}[b]{0.49\textwidth}
        \centering
        \includegraphics[width=\textwidth]{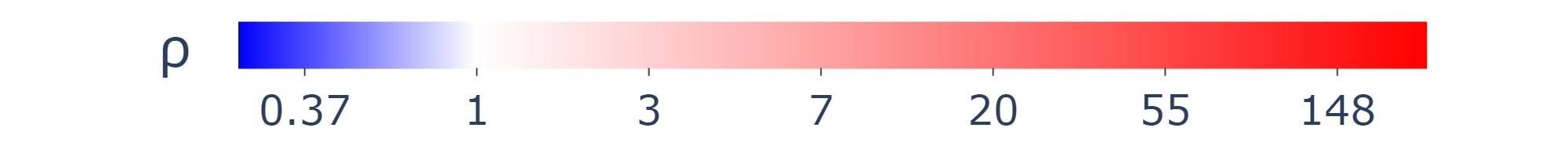}
    \end{subfigure}
    \caption{Ternary diagram of the group proportions ($\alpha_1$, $\alpha_2$, $\alpha_3$) color-coded by the values of $\rho_1$ (a) and $\rho_2$ (b) when $k=3$ and with no within-group variability. The color gradient indicates the departure of the eigenvalues from one (white when equal, blue when smaller and red when larger). Group means are fixed to $\bo{t}_1 = (200, 0)$, $\bo{t}_2 = (400, 100)$, $\bo{t}_3 = (0, 0)$ but have no impact on the results.}
    \label{fig:tern_grad}
\end{figure}

Fig.~\ref{fig:tern_1} illustrates that the values of $\rho_1$ are less than one in the center of the plot, indicating that when none of the three groups has a proportion smaller than roughly 18\%, $\rho_1$ is less than one. Around this region, there is a white area that corresponds (given the continuity of the eigenvalues as functions of the group proportions) to $\rho_1$ almost equal or equal to one.  $\rho_1=1$ is thus indistinguishable from the eigenvalues associated with the $p-q$  dimensions that do not span the FDS. It occurs when one group proportion is roughly 18\%. The area beyond this white zone in the direction of the vertices is red, which indicates that the value of $\rho_1$ is greater than one when a group proportion is smaller than 18\%.
Looking at the areas around the vertices on Fig.~\ref{fig:tern_2} demonstrates that the value of $\rho_2$ is greater than one for scenarios where two proportions are smaller than roughly 18\%. A white area is also present around each of these regions, which corresponds to values of $\rho_2$ almost equal or equal to one. It occurs once again when one group has a proportion of roughly 18\% and another group has a proportion smaller than or equal to 18\%. The rest of the plot is blue, implying $\rho_2$ is smaller than one. A comparison of Fig.~\ref{fig:tern_1} and \ref{fig:tern_2} indicates that the white regions do not seem to intersect, thereby suggesting that the two eigenvalues are not equal to one simultaneously. In the two-group case, at the threshold value of 21\%, all eigenvalues are equal to one. In the three-group case with $q=k-1$, our conjecture is that there is at least one eigenvalue that differs from one for any possible group proportion scenario. 

The eigenvalues representation from Fig.~\ref{fig:tern_1} and \ref{fig:tern_2} are interesting since the greater the deviation from one, the easier it will be to detect the directions that span the FDS when in practice we will have $p>k$.
To facilitate the comparison between the behavior of the two eigenvalues, both are plotted on the same ternary diagram on Fig.~\ref{fig:tern_plot}.  
\begin{figure}[htbp]
    \centering
    \includegraphics[width=0.6\textwidth]{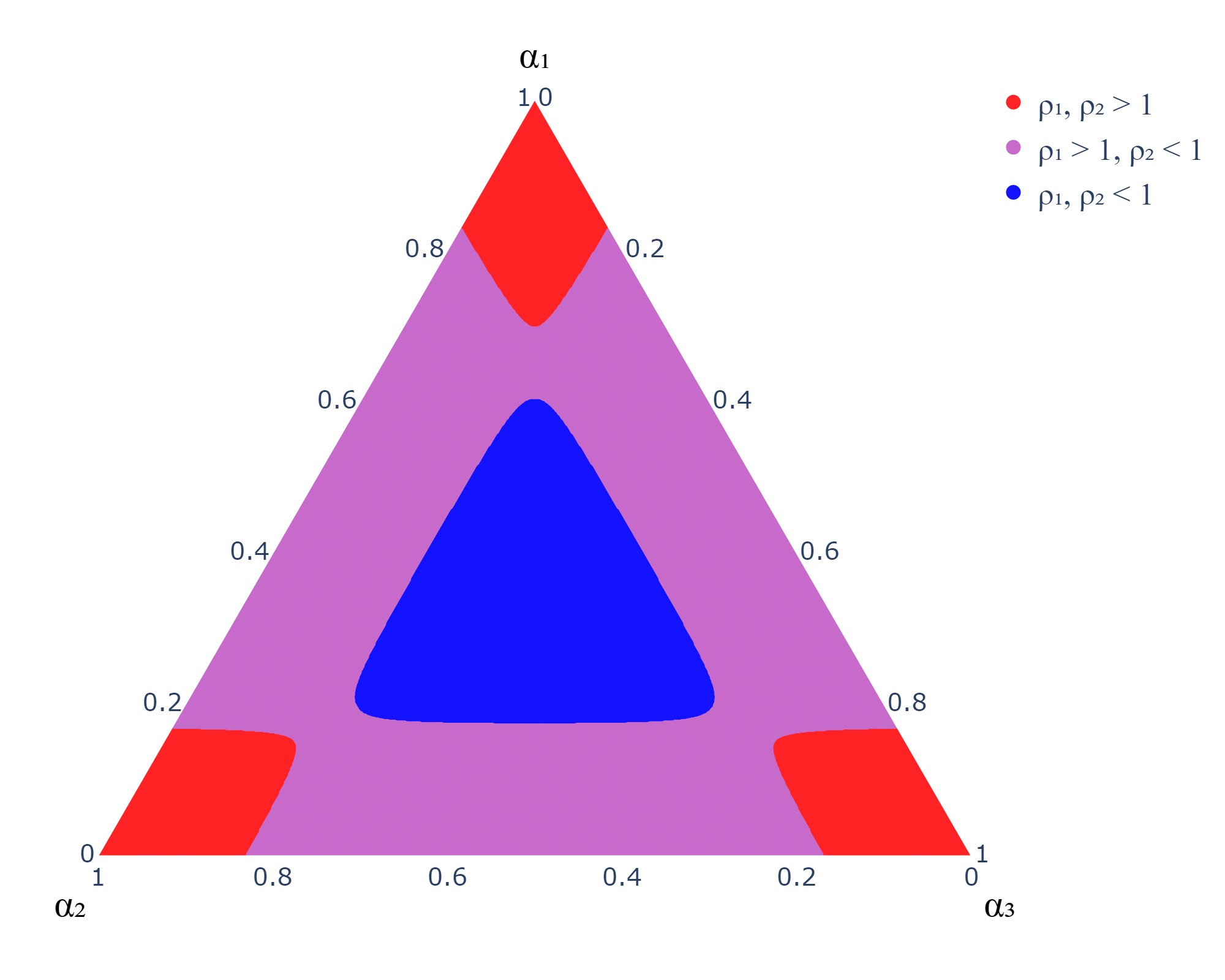}
    \caption{Ternary diagram of the group proportions ($\alpha_1$, $\alpha_2$, $\alpha_3$) color-coded by the eigenvalues of $\cov^{-1}\cov_4$ for $k=3$ and $p=q=2$ (blue when both eigenvalues are smaller than one, purple when one eigenvalue is larger and the other is smaller than one, and red when both eigenvalues are larger than one). Group centers are fixed to $\bo{t}_1 = (200, 0)$, $\bo{t}_2 = (400, 100)$, and $\bo{t}_3 = (0, 0)$ but have no impact on the results.}
    \label{fig:tern_plot}
\end{figure}
In the event that two groups exhibit proportions roughly smaller than 18\%, both eigenvalues are greater than one, as may be observed in the red area of Fig.~\ref{fig:tern_plot}. When a single group proportion is roughly smaller than 18\%, the first eigenvalue is greater than one and the second is less than one. This occurs in the purple region in Fig.~\ref{fig:tern_plot}. Finally, when all groups are in proportion roughly larger than 18\%, in the blue area of Fig.~\ref{fig:tern_plot}, both eigenvalues are less than one. As illustrated in Fig.~\ref{fig:tern_1} and \ref{fig:tern_2}, each boundary between the colored zones should display scenarios for which a single eigenvalue differs from one. However, the analysis has a numerical constraint: the 0.1\% step size for the discretization of the group proportions while we use the exact value of one as the reference eigenvalue. A consequence of this experiment limitation is that only three scenarios exhibit an eigenvalue precisely equal to one. They correspond to a proportion of 60\% for one group and 20\% for the other two groups. These points are not represented in Fig.~\ref{fig:tern_plot}. Refer to Appendix B.1 for additional details on the construction of Fig.~\ref{fig:tern_grad} and \ref{fig:tern_plot}.

The findings for $k=3$ indicate that there is a threshold value (around 18\%) such that
when a group proportion goes from a value above this threshold to a value below, an eigenvalue of $\cov^{-1 }\cov_4$ goes from a value less than one to a value greater than one. For two groups, this threshold is known and is approximately 21\%. From Fig.~\ref{fig:tern_plot}, establishing the precise threshold for three groups is not easy, but it is clear that the threshold is less than 21\% and approximately 18\%. Additionally, the rounded boundaries between the zones suggest that this threshold may slightly vary with the other group proportions. The goal now is to increase the number of groups beyond three, and to study whether there also exists a threshold value for the group proportions that would induce the same impact on the behavior of the ICS eigenvalues with $\cov-\cov_4$.
Subsection \ref{subsec:dpropk} also proposes to numerically approximate the values of these thresholds as a function of the number of groups $k$.

\subsection{The general case of \texorpdfstring{$k$}{} groups}\label{subsec:dpropk}

The thresholds described previously are computed in this section for three setups of group proportions and for values of $k$ between 2 and 10. Note that for a given $k$, the maximum proportion of a group is larger than $1/k$, while the minimum proportion is smaller than $1/k$. The balanced case corresponds to proportions all equal to $1/k$. In Setup~1, we consider scenarios of the mixture proportions such that the first group proportion ranges from 0.001 to $1/k$, the intermediate group proportions are set to $1/k$, and the last group proportion is adjusted to maintain the total sum to one. The eigenvalues of $\cov^{-1}\cov_4$ are calculated for each scenario. The threshold is defined as the first value of the first group proportion for which an eigenvalue exceeds one when decreasing the first group proportion. In Setup~2, the second group proportion is set to the threshold value identified in Setup~1 plus 2\%. The first group proportion ranges from 0.001 to $1/k$. The remaining groups maintain a proportion of $1/k$, with the exception of the last group, whose proportion is adjusted as in Setup~1 to maintain the total sum to one. The objective of this setup is to examine the influence of a group located in proximity to the threshold. It requires at least three groups. The threshold is determined similarly to Setup~1. The grid in Setup~3 is identical to the one in Setup~2, with the exception that the second group proportion is fixed at 5\%, and the last group proportion is adjusted accordingly. This setup includes a group proportion initially below the threshold for any value of $k$ between 3 and 10, thereby ensuring that at least one eigenvalue exceeds one. The threshold is defined as the first value of the first group proportion for which two eigenvalues exceed one. As with Setup~2, this method requires at least three groups. Further details and examples may be found in Appendix B.2. The thresholds identified through the three setups, for $k=2$ to 10 groups, are plotted in Fig.~\ref{fig:thresh}. 
\begin{figure}[htbp]
\centering
\includegraphics[width=0.65\textwidth]{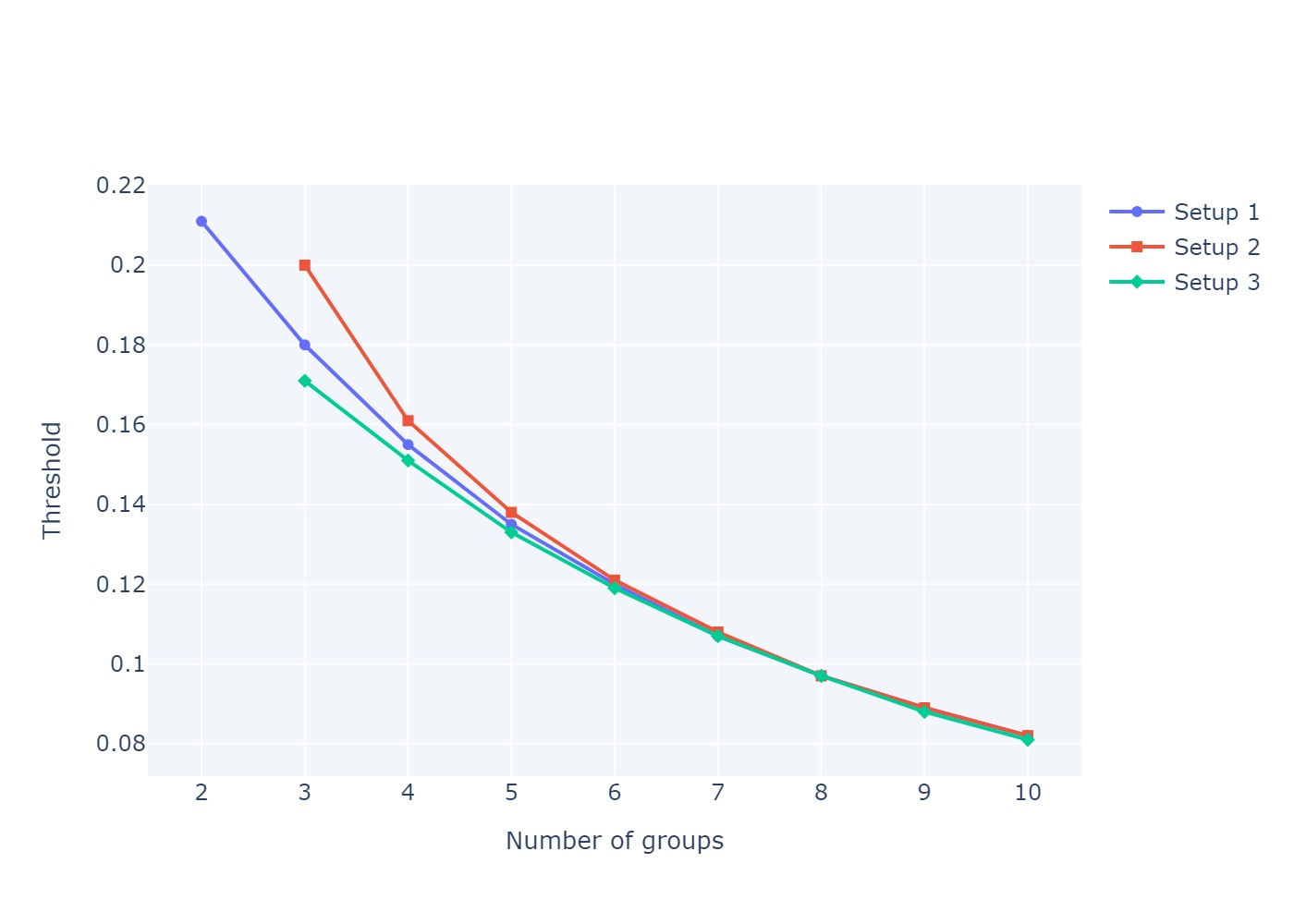}
\caption{Group proportion thresholds computed for different values of the number of groups $k$ and three setups. Setup 1: all group proportions initially equal. Setup 2: $\alpha_2$ near the Setup 1 threshold. Setup 3: $\alpha_2$ equal to 5\%.}
\label{fig:thresh}
\end{figure}
Since the grid generation in Setups 2 and 3 requires more than three groups, there is only one point when $k=2$ for Setup~1. This point is roughly equal to 21\% which is the result already known theoretically. For $k=3$, the threshold value found for Setup~1 is around 18\%, the one for Setup~2 is 20\%, and the one for Setup~3 is 17\%. Fixing a group proportion near the first threshold or below the threshold has indeed an impact. Nonetheless, the thresholds remain around 18\%. As $k$ increases, the thresholds for the three setups appear to converge toward the same value and to depend solely on $k$.
Table \ref{tab:thresholds} in Appendix B.2 gives the specific values of the thresholds for $ k \in \{3, \ldots, 10\}$.
It is worth noting that the presented setups could be applied to a larger number of groups, although we have limited our analysis to 10 groups.

Once the thresholds for three to ten groups have been identified, the boxplots from the previous section can be reproduced for adjusted scenarios. It is of particular interest to visualize the eigenvalues when one group proportion is at the identified threshold, and to select two or three of the previous mixture proportion scenarios to vary the number of groups with proportions below this threshold. Fig.~\ref{fig:boxplot_threshold} illustrates this experiment. Although boxplots are plotted, the group centers remain fixed and equal to the values of the first row of the group centers of Table S1 of Supplementary Section S1.
\begin{figure}[htbp]
\begin{center}
\includegraphics[width=1\textwidth]{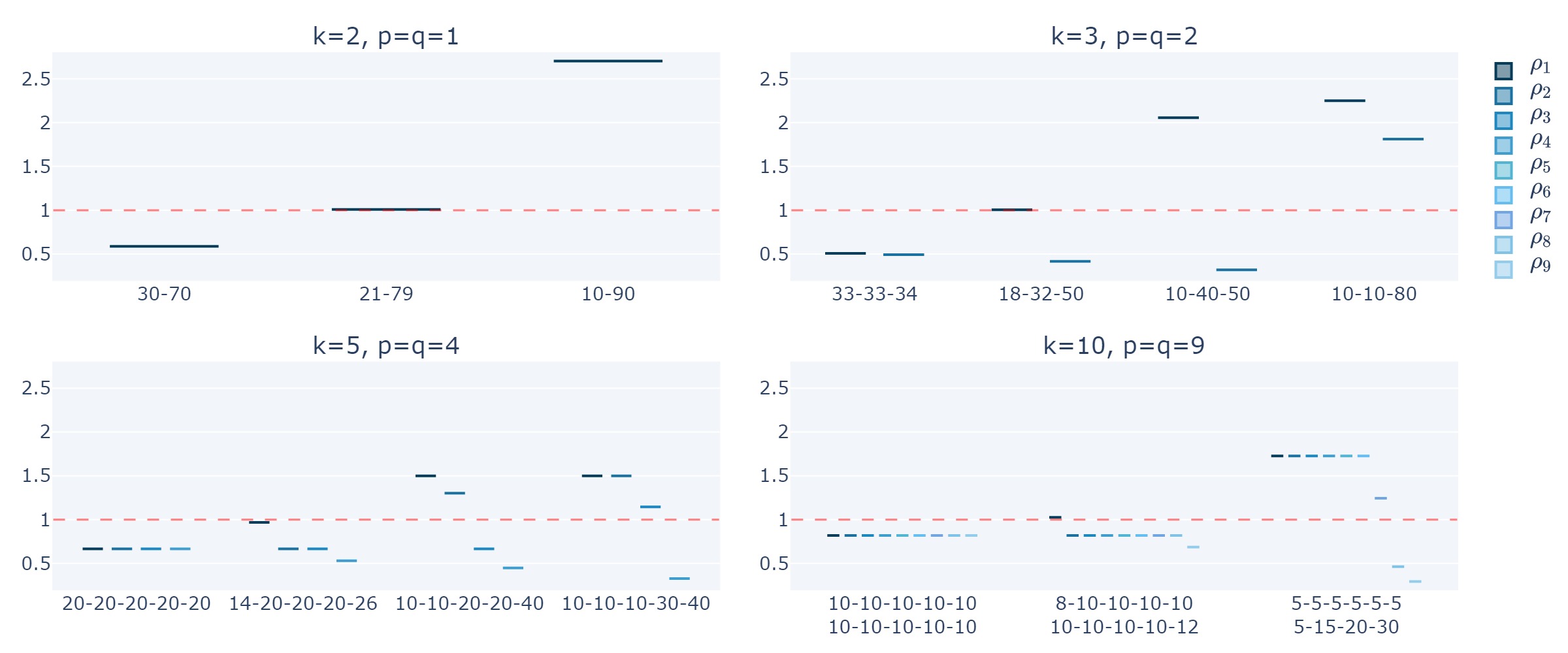}
\caption{Boxplots of the eigenvalues of $\cov^{-1}\cov_4$ with no within-group variability, with $p = q = k-1$, and 14 different group proportions scenarios ($\alpha_j$, $j\in\{1,\ldots, k\}$) on the $x$-axis. The values of $k$ vary across panels. In each panel, the second scenario (on the $x$-axis) contains a group at the identified threshold value: approximately 21\% when $k=2$, 18\% when $k=3$, 14\% when $k=5$, and 8\% when $k=10$}.
\label{fig:boxplot_threshold}
\end{center}
\end{figure}

The results presented in Fig.~\ref{fig:boxplot_threshold} validate the thresholds of Setup~1 presented in Fig.~\ref{fig:thresh} for $k\in\{2, 3, 5, 10\}$ groups. In each boxplot, the scenario where a group is at the threshold level has an eigenvalue almost equal to~one. This is the first eigenvalue, given that these scenarios do not have proportions below the respective threshold. The results are consistent with the logic described in Fig.~\ref{fig:tern_plot}, which states that the number of eigenvalues greater than one is linked to the number of groups with proportion below the threshold. For the scenarios of Fig.~\ref{fig:boxplot_threshold}, when $k=5$ (resp. $k=10$), the number of eigenvalues larger than one is equal to the number of groups with proportions equal to 10\% (resp. 5\%). However, this relationship needs further investigation in cases involving more than three groups.

\section{Study of \texorpdfstring{$\cov^{-1}\cov_4$}{} eigenvalues for a Gaussian mixture of three groups with aligned group centers}\label{sec:aligned}

In Section \ref{sec:iraccen}, Fig.~\ref{fig:boxplot_dirac_q_eq} and \ref{fig:boxplot_dirac_q_lt_1} display the eigenvalues of ICS using $\cov-\cov_4$ for various Dirac mixture models described by~\eqref{modeldc}. Fig.~\ref{fig:boxplot_dirac_q_eq} illustrates that, if $q=k-1$, these eigenvalues do not depend on the group means, and Proposition~\ref{prop:eigen} in Section \ref{sec:iraccen} confirms the result theoretically. However, as shown in Fig.~\ref{fig:boxplot_dirac_q_lt_1}, this result does not hold if $q<k-1$, leading to a more complex analysis of the eigenvalues. In the present section, we examine the eigenvalues of $\cov^{-1}\cov_4$ for model~\eqref{modelt} with three Gaussian groups ($k=3$) and $q=1<k-1=2$. This corresponds to the following $p$-dimensional Gaussian mixture:
\begin{equation}\label{modelk3q1}
    \alpha_1 {\cal N}_p(\bo{t}_1,\bo{I}_p) +  \alpha_2 {\cal N}_p(\bo{t}_2,\bo{I}_p)+  (1-\alpha_1-\alpha_2) {\cal N}_p(\bo{0}_p,\bo{I}_p),
\end{equation}
with $\bo{t}_1=(t_{11},0,\ldots,0)^\top$, $\bo{t}_2=(t_{21}, 0,\ldots,0)^\top$, $t_{11} \neq 0$, $t_{21}\neq 0$, and $t_{11}\neq t_{21}$.  We can prove the following proposition.

\begin{proposition}\label{prop:k3q1}
Consider model \eqref{modelk3q1} and let $r_{\alpha_1, \alpha_2}$ be the following polynomial of degree 4:
\begin{align*}
   r_{\alpha_1, \alpha_2}(x)=\alpha_1 (-1 + 7 \alpha_1 - 12 \alpha_1^2 + 6 \alpha_1^3) x^4 + 4 \alpha_1 \alpha_2(1 - 6 \alpha_1 + 6 \alpha_1^2) x^3 \\ +
 6 \alpha_1 \alpha_2 (1 - 2 \alpha_2 + \alpha_1 (-2 + 6 \alpha_2)) x^2 + 
 4 \alpha_1 \alpha_2 (1 - 6 \alpha_2 + 6 \alpha_2^2) x + \alpha_2 (-1 + 7 \alpha_2 - 12 \alpha_2^2 + 6 \alpha_2^3).
\end{align*}
All eigenvalues of ICS for the scatter pair $\cov-\cov_4$ are equal to one if and only if the first coordinates $t_{11}$ and $t_{21}$ of the group centers 
are such that 
\begin{equation}\label{eq:req}
r_{\alpha_1, \alpha_2}(t_{11}/t_{21})=0.
\end{equation}
\end{proposition}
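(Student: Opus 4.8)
The plan is to reduce the $p$-dimensional eigenvalue problem to a one-dimensional kurtosis condition and then match that condition to the quartic $r_{\alpha_1,\alpha_2}$. Since the three centers in model~\eqref{modelk3q1} all lie on the first coordinate axis, the random vector is already in a form analogous to the transformed model~\eqref{modelt}: coordinates $2,\ldots,p$ form a single standard Gaussian $\mathcal{N}_{p-1}(\bo{0},\bo{I}_{p-1})$ that is independent of the first coordinate and of the group label. I can therefore invoke the block decomposition~\eqref{covcov4} with $q=1$, for which $\bo{B}\bs{\Psi}$ is the scalar $b_{11}\psi_{11}$ while the remaining $p-1$ eigenvalues of $\cov^{-1}\cov_4$ equal one by construction. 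Hence ``all eigenvalues of ICS equal one'' is equivalent to the single scalar equation $b_{11}\psi_{11}=1$.

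First I would compute $b_{11}\psi_{11}$ explicitly. Writing $X_1^c$ for the centered first coordinate and using $b_{11}=1/\beta_{11}$ with $\beta_{11}=\V(X_1)$, formula~\eqref{eq:amn} together with the independence of the noise coordinates gives
$$b_{11}\psi_{11}=\frac{1}{p+2}\left(\frac{\E[(X_1^c)^4]}{\beta_{11}^2}+(p-1)\right).$$
Setting this equal to one collapses the $p$-dependence and leaves the dimension-free condition $\E[(X_1^c)^4]=3\beta_{11}^2$; that is, the univariate three-component Gaussian mixture on the first axis has zero excess kurtosis.

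Next I would translate this condition into a statement about the centers alone. Splitting the fourth central moment into within- and between-group contributions, each component being $\mathcal{N}(\cdot,1)$ and contributing the Gaussian moments $\E[Z^2]=1$, $\E[Z^4]=3$, yields, with $\delta_j=m_j-\mu$ the centered means ($m_1=t_{11}$, $m_2=t_{21}$, $m_3=0$, $\mu=\alpha_1 t_{11}+\alpha_2 t_{21}$) and $\sigma_b^2=\sum_j\alpha_j\delta_j^2$,
$$\E[(X_1^c)^4]=3+6\sigma_b^2+\sum_{j=1}^3\alpha_j\delta_j^4,\qquad \beta_{11}=1+\sigma_b^2.$$
The within-group Gaussian terms cancel exactly against $3\beta_{11}^2=3(1+\sigma_b^2)^2$, so the condition reduces to the \emph{between-group} kurtosis identity $\sum_{j}\alpha_j\delta_j^4=3\big(\sum_j\alpha_j\delta_j^2\big)^2$. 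This is precisely the condition one obtains for the Dirac mixture of Section~\ref{sec:iraccen}, which explains why, as in Fig.~\ref{fig:boxplot_dirac_q_lt_1}, the outcome depends on the centers when $q<k-1$.

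Finally I would verify the polynomial form. Both $\delta_j$ and $\mu$ are linear in $(t_{11},t_{21})$, so the two sides above are homogeneous of degree four; dividing by $t_{21}^4\neq 0$ turns the identity into a degree-$4$ polynomial equation in the ratio $x=t_{11}/t_{21}$, and substituting $\alpha_3=1-\alpha_1-\alpha_2$ and collecting terms should reproduce $r_{\alpha_1,\alpha_2}(x)$ up to a nonzero constant, giving~\eqref{eq:req}. The bulk of the work, and the main obstacle, is this last expansion: tracking the $\alpha$-dependence of the five coefficients of the quartic is heavy bookkeeping. As partial checks I note that the symmetry $(t_{11},\alpha_1)\leftrightarrow(t_{21},\alpha_2)$, under which $x\mapsto 1/x$, must send the quartic to its reciprocal, and that evaluating the reduced polynomial at $t_{11}=0$ gives $\alpha_2(1-\alpha_2)(1-6\alpha_2+6\alpha_2^2)$, which matches the constant term of $r_{\alpha_1,\alpha_2}$ up to sign; the analogous evaluation at $t_{21}=0$ fixes the $x^4$ coefficient.
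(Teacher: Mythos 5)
Your proposal is correct and follows essentially the same route as the paper's own proof: block-diagonalize $\cov^{-1}\cov_4$ to reduce the statement to the scalar condition $b_{11}\psi_{11}=1$, show this is equivalent to $\E[(X_1^{\mbox{c}})^4]=3\beta_{11}^2$, and then expand the resulting homogeneous quartic in $(t_{11},t_{21})$ into $r_{\alpha_1,\alpha_2}(t_{11}/t_{21})=0$. The only difference is cosmetic — the paper delegates the final coefficient expansion to Mathematica, whereas you leave it as bookkeeping supported by (correct) symmetry and boundary checks.
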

The proof is given in Appendix C.
Proposition \ref{prop:k3q1} demonstrates that the behavior of the ICS eigenvalues for a mixture of three Gaussian groups with aligned group centers differs from the two-group case (see Subsection \ref{subsec:2gr}). For model \eqref{modelk3q1}, the condition under which all eigenvalues of $\cov^{-1}\cov_4$ equal one depends not only on the group proportions but also on the group centers, as determined by the fourth-order \eqref{eq:req}.
This equation has four solutions in the complex plane but, between zero and four solutions in the real space, depending on the group proportions. Using Mathematica \citep{inc_mathematica_nodate} to simplify the computations, we found that for $\alpha_1=\alpha_2\leq(3-\sqrt{3})/12$ ($\simeq 10.6\%$), or $\alpha_1=\alpha_2$ equal to $1/3$ or $1/4$, \eqref{eq:req} has no real solution. This implies that one and only one eigenvalue of ICS differs from one, and is associated with an eigenvector that spans the FDS (which is only one-dimensional for model \eqref{modelk3q1}). However, we can also identify group proportions for which, given specific group centers, all eigenvalues of ICS equal one, indicating that ICS does not work in all situations. For instance, when $\alpha_1=\alpha_2=1/6$, \eqref{eq:req} has three real solutions, and all the eigenvalues of ICS equal one if $t_{21}/t_{11}$ equal -1, $(2\sqrt{6}+7)/5$ or $(-2\sqrt{6}+7)/5$. While these situations are highly specific, they illustrate that analysing the eigenvalues of $\cov^{-1}\cov_4$ becomes more complex when $q$ is less than $k-1$, and that this phenomenon is already true for $k=3$.

\section{Empirical study}\label{sec:emp_study}
In Section~\ref{sec:iraccen}, we derive a theoretical understanding of the behavior of the eigenvalues of $\bo{V}_1^{-1}\bo{V}_2$ in the context of a mixture of Dirac distributions (model~\eqref{modeldc}). Firstly, for a mixture model with full rank group centers matrix and in the absence of within-group variability, the eigenvalues of $\bo{V}_1^{-1}\bo{V}_2$, depend only on the group proportions (see Proposition~\ref{prop:eigen}). Secondly, in Subsection~\ref{subsec:dpropk}, the thresholds for the proportion of groups that result in an eigenvalue transitioning from less than the eigenvalue of multiplicity $p-q$ to greater than it, have been identified for different number of groups. Both analyses are restricted to the case where the dimension of the FDS is $q = k-1$. 
To confirm those results in a more general context but still with $q=k-1$, we perform simulations of a mixture of Gaussian distributions, including within-group variability and noise, and we also focus on different scatter pairs.
Subsection~\ref{subsec:simus_design} describes the simulations settings, Subsection~\ref{subsec:simus_means} studies the eigenvalues of  $\cov^{-1}\cov_4$ when group centers vary in the presence of within-group variability, while Subsection~\ref{subsec:simus_scatters} focuses on the behavior of eigenvalues for different scatter pairs. 

\subsection{Simulation design\label{subsec:simus_design}}
We generate $n=1000$ observations from a particular case of model~\eqref{modelt} of a mixture of Gaussian distributions with $k$ different group means and equal covariance matrix, under the assumption that the group centers matrix is of full rank $q = k - 1$ as in \cite{alfons_tandem_2024}:
\begin{equation*}\label{mixt}
\bo{X} \sim \sum_{j=1}^{k} \alpha_j \, {\cal N}(\bo{t}_j,\bo{I}_p),
\end{equation*}
where $\alpha_j>0$ are mixture weights such that $\sum_{j=1}^k\alpha_j=1$, for $j\in\{1,\ldots, k\}$, $\bo{t}_1 = 0$, and $\bo{t}_{l+1} = \delta \bo{e}_l$, for $l \in \{1, \ldots, k-1\}$,  $\bo{e}_l$ is a $p$-dimensional vector with one in the $l$-th coordinate and zero elsewhere, for different number of clusters $k \in \{2,3,5,10\}$,  on $p=5k$ variables. In this setting, the cluster structure lies in a low-dimensional subspace of dimension $q=k-1$. We consider 17 scenarios of group proportions ($\alpha_j$, $j\in\{1,\ldots, k\}$): 
\begin{itemize}
    \item $k=2$ clusters: 50--50, 40--60, 30--70, 21--79, 10--90,
    \item $k=3$ clusters: 33--33--34, 18--35--50, 10--40--50, 10--30--60, 10--80--10,
    \item $k=5$ clusters: 20--20--20--20--20, 14--20--20--20--26, 10--20--20--20--30,
    10--10--20--20--40,
    \item $k=10$ clusters: 10--10--10--10--10--10--10--10--10--10,
                    8--10--10--10--10--10--10--10--10--12,
                    5--5--5--5--5--5--5--15--20--30.
\end{itemize}

We consider  $\delta \in \{1,5,10,50,100\}$ for the study on different group centers in Subsection~\ref{subsec:simus_means}. For the study on different scatters in Subsection~\ref{subsec:simus_scatters}, we fix $\delta=10$ and we focus on the following scatter pairs:
$\cov-\cov_4$,
$\covAxis-\cov$,
$\tcov-\cov$ and
$\mcd_{\tau}-\cov$   with $\tau \in \{0.25, 0.5, 0.75\}$.  
$\covAxis$ is a one-step M-estimator using the inverse weight function of the squared Mahalanobis distance, $\tcov$ is also a
 one-step M-estimator but with weights based on pairwise Mahalanobis distances and $\mcd_{\tau}$ are the (raw) minimum covariance determinant estimators based on $n_\tau = \lceil \tau n \rceil$ observations for which the sample covariance matrix has the smallest determinant. 
 See Subsection~3.1 of \cite{alfons_tandem_2024} for details on those scatter matrices. Note that we follow \cite{alfons_tandem_2024} and take $\bo{V}_1$ being more robust than $\bo{V}_2$. In addition, $\mcd_{\tau}$ are computed using the FAST-MCD algorithm of \citep{rousseeuw_fast_1999}. For the selection of components, we choose the med criterion as introduced in \cite{alfons_tandem_2024}, which selects the $k-1$ components whose eigenvalues deviate the most from the median of all eigenvalues. This test relies on the assumption that the dimension of interest $q$ is low compared to the number of variables $p$: $q \leq p/2$, which implies that the majority of the eigenvalues should be equal, which is met in our context.   We simulate 50 data sets for each of the different settings. 

\subsection{\texorpdfstring{$\cov^{-1}\cov_4$}{} eigenvalues when group centers vary in the presence of within-group variability and 
\texorpdfstring{$q=k-1$}{}\label{subsec:simus_means}}

As proven in Proposition~\ref{prop:eigen}, theoretically, for a mixture model with full rank group centers matrix and in the absence of within-group variability, the eigenvalues of $\bo{V}_1^{-1}\bo{V}_2$ depend only on the group proportions $\alpha_j$, $j\in\{1,\ldots, k\}$ of the mixture components. Here, instead of computing the eigenvalues theoretically, we simulate a mixture of Gaussian distributions with some within-group variability to evaluate the sensitivity of this proposition and we compare with the results in Fig.~\ref{fig:boxplot_threshold} for $\cov^{-1}\cov_4$. In Fig.~\ref{fig:boxplot_eigenvalues_simus}, we display the boxplots of the $k-1$ first and the $k-1$ last eigenvalues of $\cov^{-1}\cov_4$ over 50 replications when the group centers vary in the presence of within-group variability. Grey-shaded areas expose the eigenvalues which are theoretically different from one. 
The cases for which the eigenvalue one has multiplicity greater than $p-q$ are identified by the red-shaded areas.

\begin{figure}[!ht]
\begin{center}
\includegraphics[width=1\textwidth]{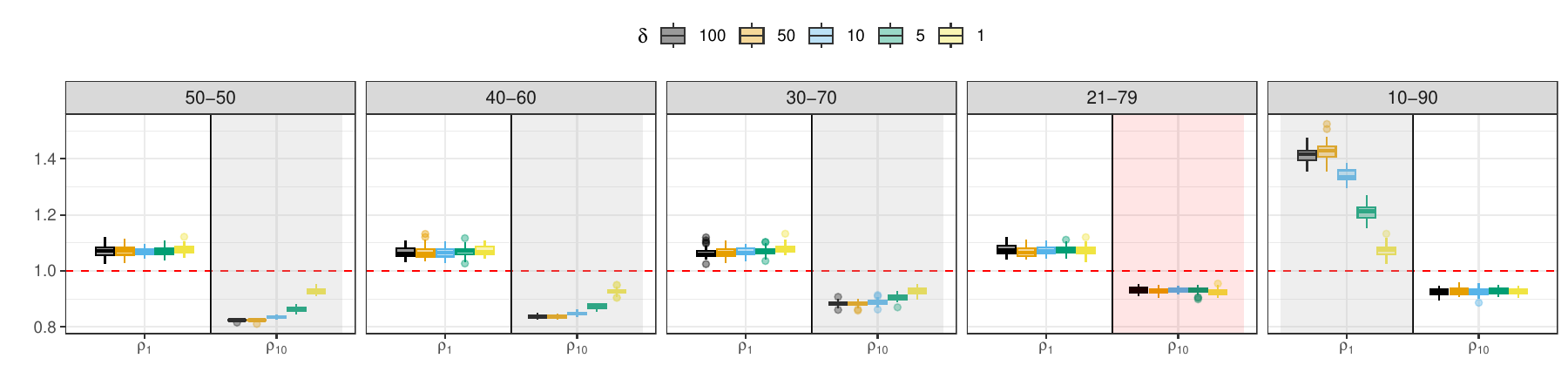}
\includegraphics[width=1\textwidth]
{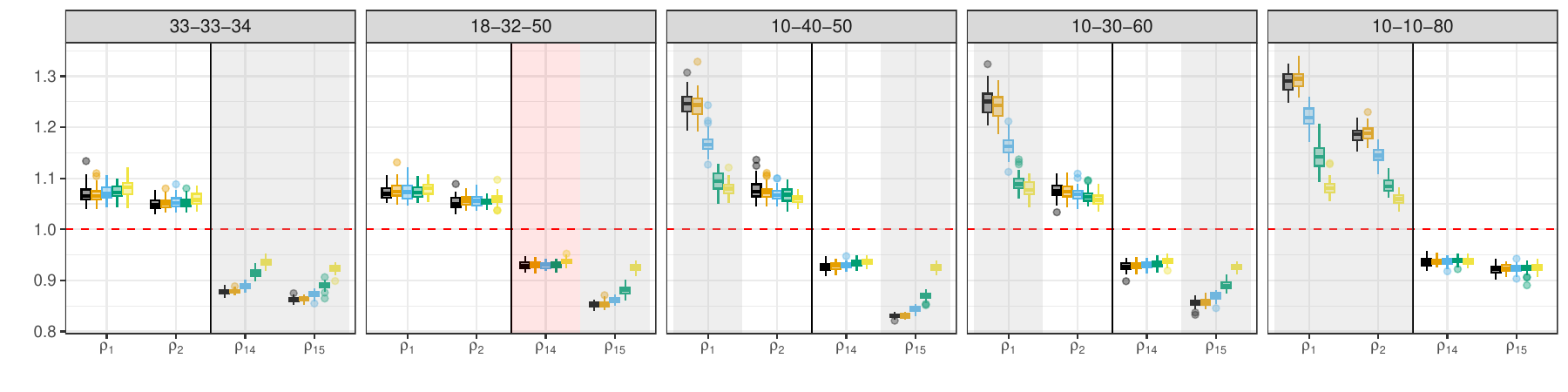}
\includegraphics[width=1\textwidth]
{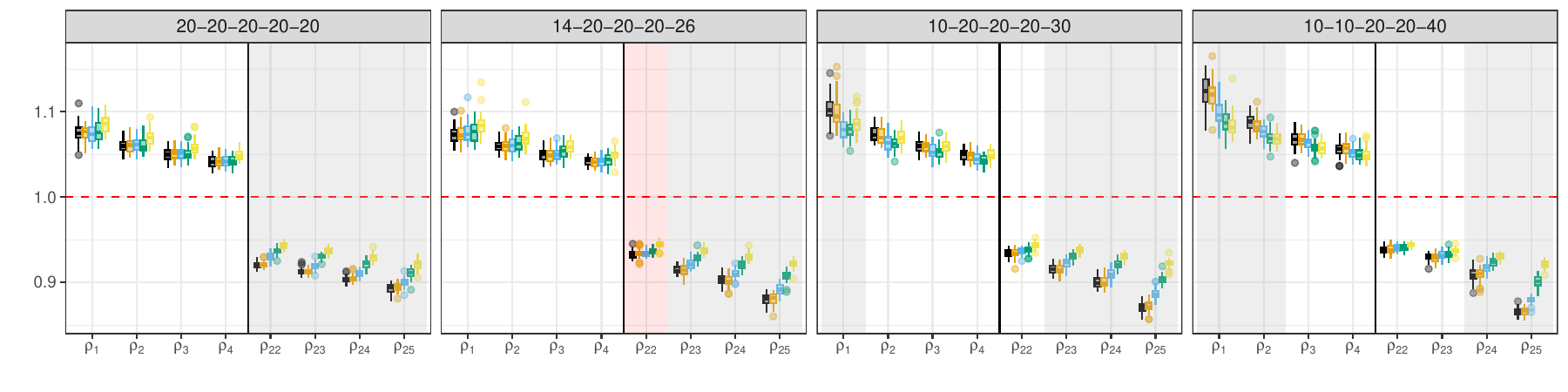}
\includegraphics[width=1\textwidth]
{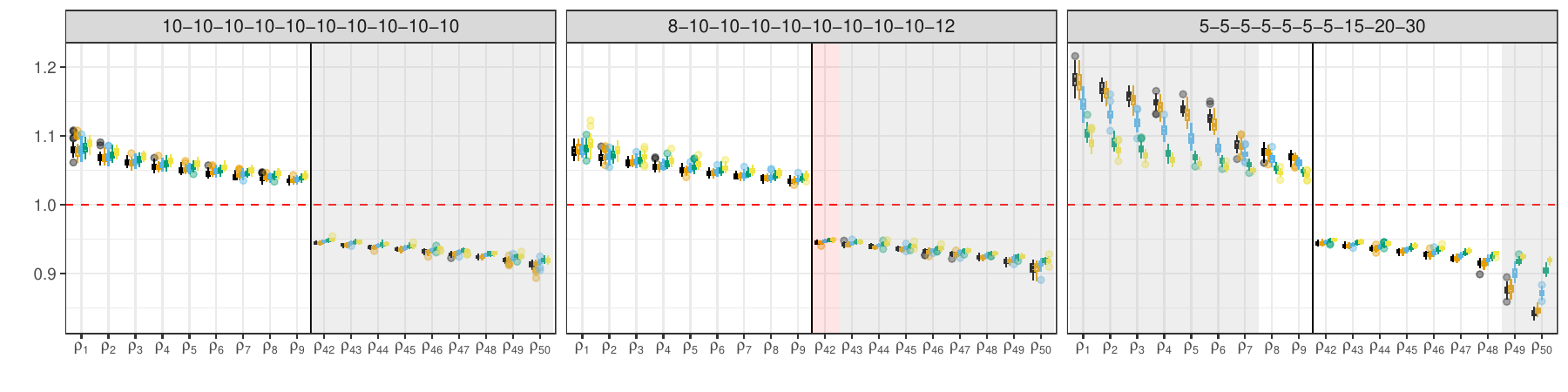}
\caption{Boxplots of the $k-1$ first and the $k-1$ last eigenvalues of $\cov^{-1}\cov_4$ when the group centers vary in the presence of within-group variability, for $q = k-1$, $p=5k$, $n=1000$, and different values of $k$ over 50 replications. Results for different group proportions are shown in different panels. Grey-shaded areas expose the eigenvalues which are theoretically different from one. Red-shaded areas highlight the cases for which the eigenvalue one has multiplicity greater than $p-q$.}
\label{fig:boxplot_eigenvalues_simus}
\end{center}
\end{figure}

Contrary to Fig.~\ref{fig:boxplot_threshold}, we can see some variability for the eigenvalues over the replications and between the different group centers, determined by the values of $\delta$. In addition, given the presence of within-group variability and noise with $p>q=k-1$, none of the eigenvalues are strictly equal to one. However, the eigenvalues of interest for highlighting the groups' structure (in grey-shaded areas) are still easily identifiable as the ones further away from one. For example, for the first row with scenarios of a mixture of two components, only the first or the last eigenvalue should be different from one. More specifically, as illustrated in Subsection~\ref{subsec:dpropk}, for a mixture with group proportion of $50-50$, only the last eigenvalue is different from one, as we can see in the first subplot. If one group proportion decreases until being below 21\% then it is now the first eigenvalue which is different from one. For this threshold of 21\%, all eigenvalues are equal to one as highlighted by the red-shaded area. We can identify the same behavior when the number of groups increases. For example, in the second row, if all the three group proportions are ``large enough'' then the last two eigenvalues are different from one. When only one group proportion is ``low'', as in the scenario $10-40-50$, the first and the last components are different from one. When two group proportions are ``low'', like in the scenario $10-10-80$, then only the first two components are different from one. This pattern repeats itself for mixtures of 5 or 10 groups as illustrated in the third and fourth rows. The thresholds are clearly identifiable and equal to 18\% for 3 groups, 14\% for 5 groups and 8\% for 10 groups, confirming those identified in Subsection~\ref{subsec:dpropk} in the absence of within-group variability. 

It is also important to note that the variability of the eigenvalues between group centers, i.e., for different values of $\delta$, is smaller for the eigenvalues which are theoretically equal to one (in white or red-shaded areas) no matter the value of $\delta$. This supports evidence for some stability of the eigenvalues associated with no structure of the data, no matter the group centers.  Clearly, at the thresholds, identified by red-shaded areas, there is almost no variability of the eigenvalues, as we can see in the first row, in the well-known case of a mixture of two components with group proportions $21-79$. 
For the eigenvalues of interest (in grey-shaded areas), some variability is visible but they seem to quickly converge as $\delta$ increases. In addition, the eigenvalues are clearly different from one as soon as $\delta \ge 5$. For $\delta=1$, the ratio signal over noise is too low and so, all eigenvalues are close to one. On the contrary, if $\delta \ge 50$, the ratio signal over noise is high and the noise is almost null so, we are almost in the same context as for the Proposition~\ref{prop:eigen}, and the eigenvalues with $\delta = 50$ or $\delta = 100$ are almost equal. In fact, it appears that the eigenvalues converge more slowly in the presence of a low group proportion, as observed in scenarios where at least one group has a proportion of 5\% or 10\%. In this context, the eigenvalues are also higher in absolute value for any $\delta$ so it is one of the favorable cases for ICS since it is easy to identify it is different from one. 
So, for the next study, we consider that the group centers have a negligible effect, and we focus on the case $\delta = 10$.

\subsection{\texorpdfstring{$\bo{V}_1^{-1}\bo{V}_2$}{} eigenvalues for different scatter pairs in the presence of within-group variability and for 
\texorpdfstring{$q=k-1$}{} \label{subsec:simus_scatters}}

In this section, we extend our study to the behavior of the eigenvalues of $\bo{V}_1^{-1}\bo{V}_2$ for different affine equivariant scatter pairs and not only to $\cov-\cov_4$. In such a context, and in the absence of within-group variability, only the group proportions should impact the eigenvalues. Here, we introduce some within-group variability and we analyze if the general behavior for the eigenvalues of $\cov^{-1}\cov_4$ is the same for different scatter pairs. More specifically, we want to know if the thresholds are linked to the same group proportions and if only ``low'' group proportions are associated with first eigenvalues being different from each other. Deriving theoretical eigenvalues for more complex scatter pairs is a difficult task so, we focus on the analysis of the invariant components instead. Indeed, to highlight the data structure, we have to project the data onto the subspace spanned by the eigenvectors associated with eigenvalues which are not equal to each other. So we can hypothesize that the invariant components to select are the ones associated with the eigenvalues which are not equal to each other. To identify the components of interest, we followed the recommendations of \cite{alfons_tandem_2024} and we chose the $k-1$ invariant components based on the med criterion, i.e., the ones associated with eigenvalues which deviate the most from the median of all eigenvalues.

In Fig.~\ref{fig:heatmaps_IC_simus}, we display heatmaps of the percentage of selections for the $k-1$ first and the $k-1$ last invariant components of $\bo{V}^{-1}\bo{V}_2$ by the med criterion, for different number of groups, for group proportions and for different scatter pairs. If the value is 100\%, it indicates that the eigenvalue associated with this component is stable and different from the others. If the percentage is less than 100\%, it means that, over the replications, the criterion has not consistently chosen the same eigenvalue, suggesting that the eigenvalue might not be significantly different from the others.
For example, in the first subplot, for the scatter pair $\cov-\cov_4$, the value is 100\% for the case of a mixture of two Gaussian distributions with group proportions of 10-90.  We can deduce that the first component IC$_1$ is always chosen over the 50 replications. This is in line with the previous conclusion which states that if one group proportion is ``low" then the first eigenvalue is different from one. In case of 50-50, it is always the last one which is chosen and of interest. For group proportions in between, the choice is not as clear mainly for the scenario 21-79. In this case, IC$_1$ is selected instead of IC$_{10}$ only in 75\% of cases. When the selection is not clear, this is a sign that the eigenvalues are not as far away from the others and it might indicate a threshold. Precisely, this result confirms the theory recalled in Subsection~\ref{subsec:2gr} that, when a group proportion is equal to 21\% and in presence of two groups, then all the eigenvalues are equal to one with $\cov-\cov_4$. In practice, the group proportion associated with the threshold is not as precise as in the theoretical situation since, even when a group proportion is around 30\%, it can be difficult to have a clear separation between eigenvalues in 15\% of cases.

\begin{figure}[!ht]
\begin{center}
\includegraphics[width=1\textwidth]{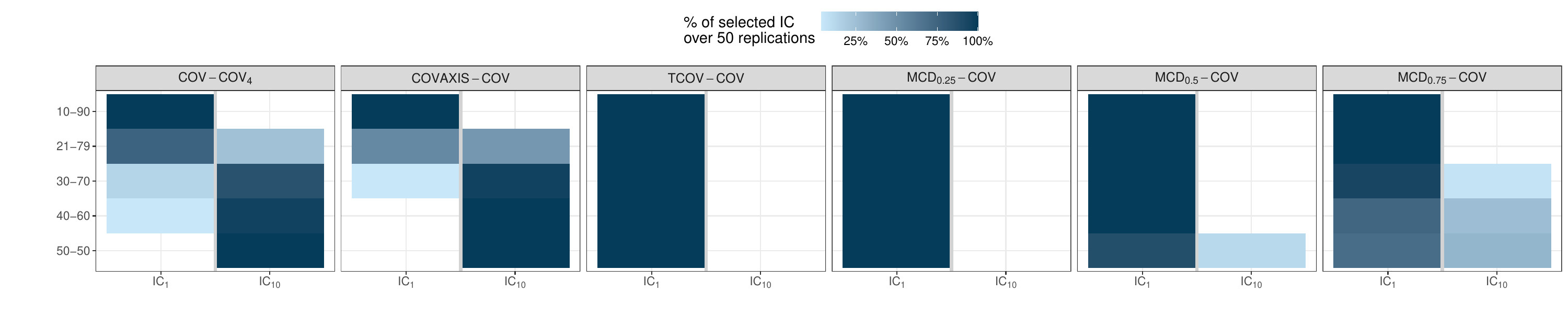}
\includegraphics[width=1\textwidth]{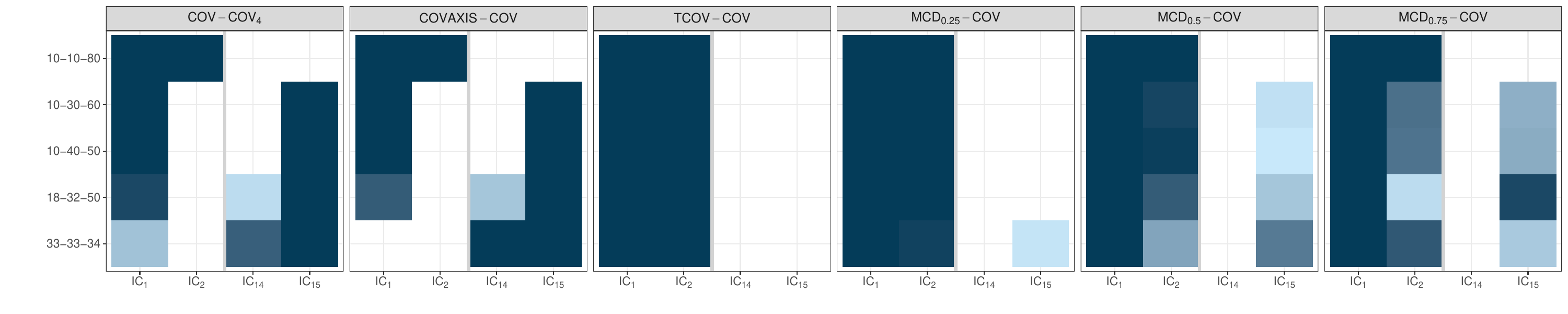}
\includegraphics[width=1\textwidth]{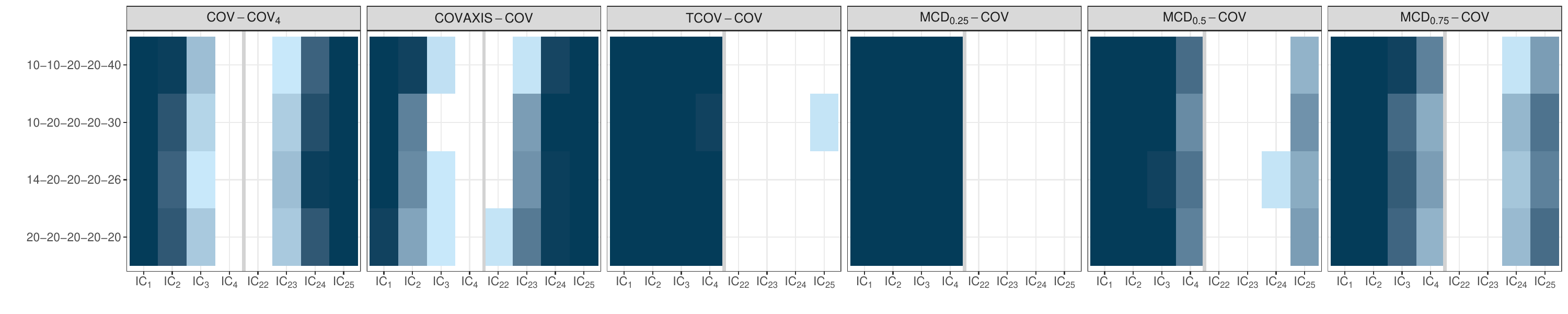}
\includegraphics[width=1\textwidth]{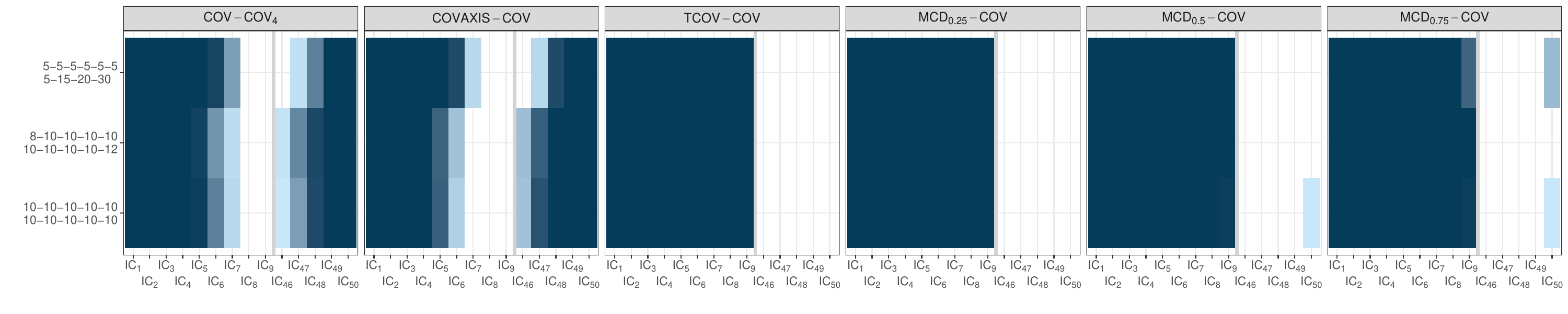}
\caption{Heatmaps of the percentage of selection for the $k-1$ first and $k-1$ last selected invariant components of $\bo{V}^{-1}\bo{V}_2$ by the med criterion, in the presence of within-group variability, for $q = k-1$, $p=k/0.2$, $n=1000$, different values of $k$ over 50 replications. Results for different number of groups are shown in separate rows,  for different group proportions in the y-axis and for different scatter pairs in different panels.}
\label{fig:heatmaps_IC_simus}
\end{center}
\end{figure}

 Overall, for $\cov-\cov_4$ we can corroborate the same thresholds as mentioned in the previous section: approximately 21\% for 2 groups, 18\% for 3 groups, 14\% for 5 groups and 8\% for 10 groups. However, because we infer the behavior of the eigenvalues based on the selection of components and we are in the context of within-group variability, noise and simulated data, the values of the thresholds are not as precise as in the theoretical cases illustrated in Fig.~\ref{fig:boxplot_threshold}.
 Nevertheless, this analysis globally confirms the results already observed and demonstrated in Subsection~\ref{subsec:dpropk}. It is also quite easy to visually identify on which components the structure of interest relies depending on the group proportions and to extend our understanding of such behavior for the other scatter pairs. 

 The scatter pair $\covAxis-\cov$ is almost exhibiting the same behavior as $\cov-\cov_4$ as well as for the thresholds. This is not surprising since $\covAxis$ is also a one-step M-estimator as $\cov_4$ but uses an inverse weight function of the squared Mahalanobis distance instead. For $\tcov-\cov$, the choice of the components seems to be easy and always associated with the first eigenvalues, except for one case with 5 groups (10-20-20-20-30). For the scatter pairs based on $\mcd_{\tau}$, the patterns are conditional on the value $\tau$. It is interesting to note that the behavior for $\mcd_{0.25}-\cov$ looks almost the same as for $\tcov-\cov$, apart from one case with three balanced groups. For $\mcd_{0.50}-\cov$ then the situation is no longer perfect and sometimes the last eigenvalue is chosen, e.g., in the scenario of group proportions of 50-50. Finally, for $\mcd_{0.75}-\cov$ the situation looks worse and the choice is not clear in a lot of cases, even more than for the scatter pair $\cov-\cov4$. These results support the recommendations of using $\tcov-\cov$ or $\mcd_{0.25}-\cov$ when clustering is the goal \cite{alfons_tandem_2024}.

 To conclude, it appears that those two scatter pairs, $\tcov-\cov$ or $\mcd_{0.25}-\cov$, seem to behave in a simpler manner than $\cov-\cov_4$ by finding groups only on the first components no matter the group proportions and with fewer thresholds. In addition, the more groups are present, the easier it looks to find them.  However, we have shown that in a few scenarios, those scatter pairs might fail to identify some groups. Since our analysis is not exhaustive on the number of scenarios, we can suppose it might happen in other situations. One idea to overcome such a potential issue, without knowing the theoretical eigenvalues, is to run ICS multiple times, with different scatter pairs and to combine and analyze the selected components from multiple pairs to be sure to detect the entire group structure. 
 Another possibility is to perform localized PP after ICS to see which directions are interesting as suggested in \cite{dumbgen_refining_2023}.

\section{Empirical application\label{sec:application}}

To extend our findings beyond theoretical distributions, we analyze the Olympic decathlon dataset from the 2016 Rio de Janeiro Games \citep{loperfido_maxskew_2017}, which contains scores across 10 events for 23 athletes. While PCA only identifies two outlying athletes (Saluri and Nakamura) in its first two components, \cite{loperfido_skewness-based_2018} showed that skewness-based projection pursuit reveals additional structure. Notably, it detects the athleteTaiwo, who despite ranking 11$^{th}$ overall, displayed unusual performance patterns – achieving both the highest High Jump score and nearly the lowest Javelin score.


As shown in Fig.~\ref{fig:application} (first row), ICS with $\cov-\cov_4$ identifies Saluri, Taiwo, and Nakamura on IC$_1$, IC$_2$, and IC$_3$ respectively, complementing both PP and PCA findings by detecting all three outliers. This matches our theoretical results from Sections~\ref{sec:iraccen}, \ref{sec:dprop}, and \ref{sec:emp_study}, where we showed that eigenvalue transitions depend solely on group proportions. For $\cov-\cov_4$, the threshold decreases from 21\% (two groups) to 8\% (ten groups) in mixture distributions. Viewing the data as a mixture model with three outliers (Taiwo, Nakamura, and Saluri, each ~4.3\%) and a majority group (86.9\%) explains why these athletes are detected on the first invariant components, consistent with common mixture modeling of outliers (\cite{archimbaud_ics_2018}, \cite{hou_re-centered_2014}).

To further analyze eigenvalue behavior, let us add a new athlete to the dataset who behaves similarly to Saluri, the lowest-ranked athlete. We apply a uniform $U_{[-20,20]}$ noise to each of Saluri's scores. This yields mixture weights of approximately $\alpha_{Saluri} \approx 8.3\%$, $\alpha_{Taiwo} = \alpha_{Nakumara} \approx 4.1\%$, and $\alpha_{majority} \approx 83.3\%$. As shown in the second row of Fig.~\ref{fig:application}, Saluri is no longer an outlier, while Taiwo appears on the first component and Nakamura on the second.
Next, we expand the Saluri-based cluster to weights of $\alpha_{Saluri} \approx 12\%$, $\alpha_{Taiwo} = \alpha_{Nakumara} \approx 4\%$, and $\alpha_{majority} \approx 80\%$ by adding another synthetic observation. In the last row of Fig.~\ref{fig:application}, Taiwo and Nakamura still appear as outliers on the first two components, but the Saluri-based cluster now stands out on the last component. This supports the theory that at a given threshold, one eigenvalue transitions from being smaller to larger than the eigenvalue of multiplicity $p - q$. For this dataset, the transition happens when a group reaches roughly 8\%: below this, it appears on the first component, and above this, on the last. Notably, this 8\% threshold aligns with an assumption of 10 groups rather than 4 (see Section \ref{sec:dprop}), suggesting that the data may naturally form 10 clusters, matching its 10 variables. Thus, even with unknown data distribution, the theoretical results hold and may provide insights into the number of clusters.

\begin{figure}[!ht]
\begin{center}
\includegraphics[width=1\textwidth]{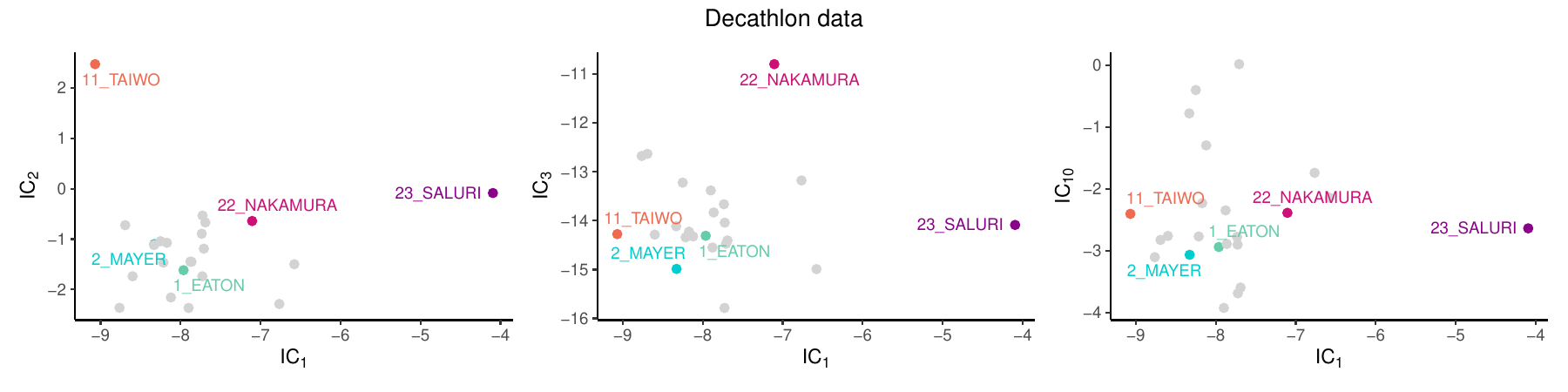}
\includegraphics[width=1\textwidth]{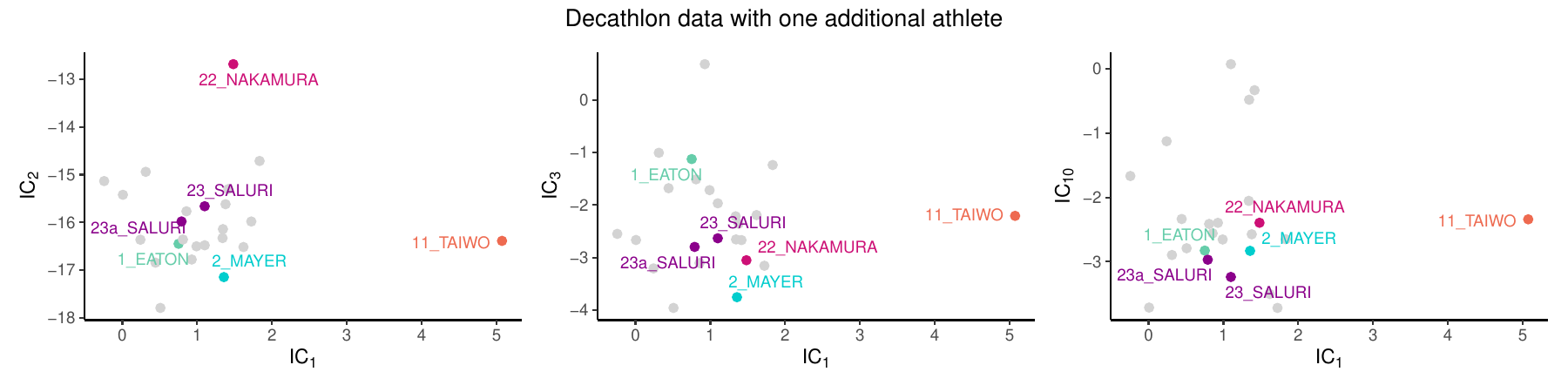}
\includegraphics[width=1\textwidth]{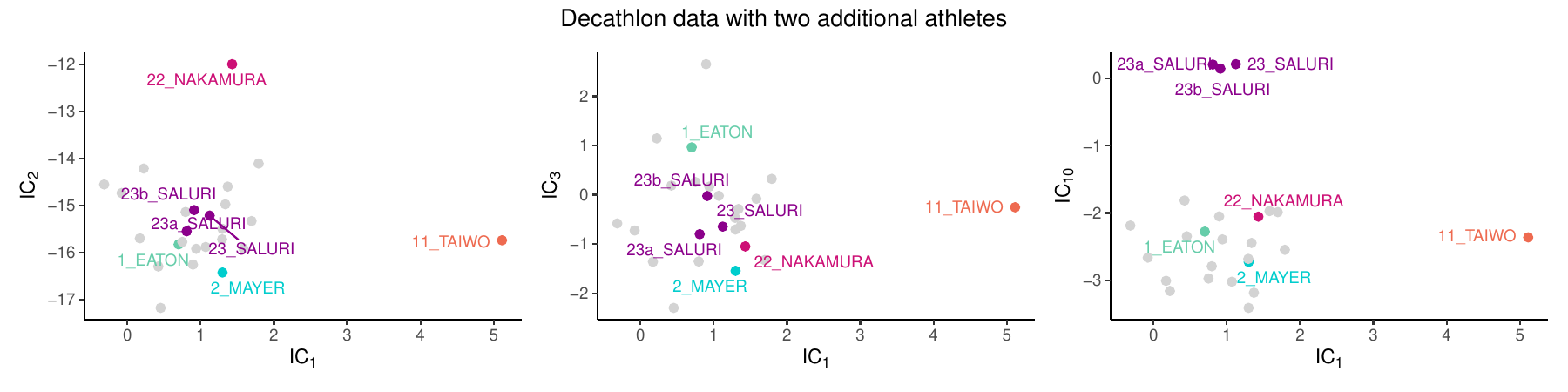}

\caption{Scatterplots of the second, third and last ICS components versus the first component with $\cov-\cov_4$ on the decathlon data set. Results for different modified data sets are shown in separate rows. The names of the athlete are preceded by their rank. The colors identify the athletes coming from the same cluster.}
\label{fig:application}
\end{center}
\end{figure}

\section{Conclusions and perspectives\label{sec:conclusion}}

Dimension reduction is becoming increasingly important. PCA is probably the most utilized method in practice due to its simplicity, despite lacking guarantees for its effectiveness as a preprocessing tool for clustering or outlier detection. In these contexts, PP appears much more natural and has theoretical justification \cite{radojicic_large-sample_2021}. However, PP is computationally expensive. From that perspective, ICS is a promising alternative -- it is computationally less demanding, and theoretical properties can be derived in quite general mixture model frameworks. As shown in the seminal paper \cite{tyler_invariant_2009}, ICS can recover the FDS. It essentially reduces to an eigenvalue problem where the noise space has identical eigenvalues. The crucial question is then whether all eigenvalues belonging to the space spanning the FDS are distinct from the noise value. In the two-group Gaussian mixture model using the combination $\cov - \cov_4$, it was known that this generally works except for one specific mixing proportion. In this work, we extended this result to cover richer mixture models and different scatter combinations, though theoretical studies seem mainly feasible only for $\cov - \cov_4$. Based on the findings, it seems that ICS is indeed a natural dimension reduction method for clustering and outlier detection, and with an increasing number of clusters ($k$ still smaller than $p$), the FDS will be estimated.

Based on the current paper, it would be worthwhile to pursue extending these results to even richer mixture models and also consider in more detail other scatter combinations. It could also be investigated whether, in cases where one scatter combination fails, there are other combinations that will work, or if there exists a global worst-case scenario.

In practice, it seems customary to compare the performance of different scatter combinations and choose the best one, which is, however, still done heuristically, and corresponding tools for a comparison could be developed. One crucial issue in practice is then also to establish what the noise space eigenvalue is, and which eigenvalues are distinct from that value. Some heuristic rules are, for example, discussed in \cite{archimbaud_ics_2018,alfons_tandem_2024,radojicic_non-gaussian_2020}, but inferential tools are still missing. So far, only \cite{kankainen_tests_2007} propose some tests using $\cov - \cov_4$ if all eigenvalues are equal in the Gaussian case (i.e., testing for multivariate normality) and \cite{luo_combining_2016, radojicic_non-gaussian_2020, nordhausen_asymptotic_2022} in a non-Gaussian component analysis framework for the equality of eigenvalues for components belonging to Gaussian components. Similar tests might be of interest also in model \eqref{model0}.



\section*{Computational details}
All computations in Sections~\ref{sec:icsFDS}, \ref{sec:iraccen}, \ref{sec:dprop} are performed with Python version 3.10.13 \citep{python3}, 
notable packages include \texttt{NumPy} \citep{harris2020array} for numerical computations, \texttt{Pandas} \citep{reback2020pandas, mckinney-proc-scipy-2010} for data manipulation and \texttt{Plotly} \citep{plotly} for data visualization in Subsection~\ref{subsec:cc4calc}, Sections~\ref{sec:iraccen} and \ref{sec:dprop}. 
Furthermore, all simulations in Section~\ref{sec:emp_study} and the analysis of the application in Section~\ref{sec:application} are performed with \proglang{R} version 4.3.3 \citep{r_core_team_r_2023} and uses the \textsf{R} packages \pkg{ICS} \citep{nordhausen_tools_2008, nordhausen_ics_2023} for ICS, \pkg{ICSClust} \citep{archimbaud_icsclust_2023} for selecting the components, \pkg{rrcov} \citep{todorov_object-oriented_2009} for the MCD scatter matrix.  
Replication files for the theoretical computations and simulations are available from \url{https://github.com/AuroreAA/ICS_FDS-Replication}.
\section*{Acknowledgments}

We thank the editor, associate editor and referees for their constructive remarks.
We thank Camille Mondon for stimulating discussions about ICS.
Part of Colombe Becquart's work was done during an Erasmus+ visit funded by the University of Toulouse at the University of Helsinki.
Anne Ruiz-Gazen acknowledges funding from the French National Research Agency (ANR) under the Investments for the Future (Investissements d’Avenir) program, grant ANR-17-EURE-0010. Klaus Nordhausen was supported by the HiTEc COST Action (CA21163) and the Research Council of Finland (363261).

\section*{Appendix A. \, Calculation details for the scatter pair \texorpdfstring{$\cov-\cov_4$}{}}

\subsection*{Appendix A.1. \, The case of Gaussian mixture in Subsection~\ref{subsec:cc4calc}}
Because of the invariance property of ICS, we consider w.l.o.g. \citep[see A.2 in][]{tyler_invariant_2009} the mixture~:
$ \displaystyle \bo{X} \sim \sum_{j=1}^{k}\alpha_j {\cal N}_p(\bo{t}_j, \bo{I}_p) $,
where the $\bo{t}_j = (t_{ji}) \in \mathbb{R}^p$ are distinct for $j \in \{1,\ldots, k\}$ and can be written $ \sum_{i=1}^{j} t_{ji} \bo{e}_i$ (with \(\bo{e}_{i} \) the $p$-dimensional vector with one in the $i$-th coordinate and zero elsewhere for $j\in\{1,\ldots, k-1\}$), and $\bo{t}_k$ is the  $p$-dimensional zero vector. Furthermore, to ease our computations, we define $\bo{X}^{\mbox{c}}:=\bo{X}-\E(\bo{X})$ whose distribution is $\displaystyle  \sum_{j=1}^{k}\alpha_j {\cal N}_p(\bo{t}_j^{\mbox{c}}, \bo{I}_p)$, 
where $\bo{t}_j^{\mbox{c}} = \bo{t}_j - \E(\bo{X})$ has coordinates $t_{ji}^{\mbox{c}}$, for $j\in\{1,\ldots, k\}$ and $i\in\{1,\ldots, q\}$. The covariance matrix of $\bo{X}$ is
$\cov= \bs{\Gamma}_W + \bs{\Gamma}_B$
where $\bs{\Gamma}_{W}=\bo{I}_p$ is the within-group covariance matrix and $\bs{\Gamma}_{B}=\sum_{j=1}^{k}\alpha_j \bo{t}_{j}^{\mbox{c}} (\bo{t}_{j}^{\mbox{c}})^\top$ is the between-group covariance matrix. 
This yields:
\begin{align*}
\cov= \begin{bmatrix} \bs{\beta} & \bo{0} \\ \bo{0} & \bo{I}_{p-q} \end{bmatrix}, \quad \cov^{-1} = \begin{bmatrix} \bo{B} & \bo{0} \\ \bo{0} & \bo{I}_{p-q} \end{bmatrix},
\end{align*}
where the terms of $\bs{\beta}$ are $\beta_{ms} = \sum_{j=1}^{k} \alpha_j t_{jm}^{\mbox{c}}t_{js}^{\mbox{c}}$ for distinct $m, s \in\{1,\ldots, q\}$, and $\beta_{mm} = 1 + \sum_{j=1}^{k} \alpha_j (t_{jm}^{\mbox{c}})^2$ for $m\in~\{1,\ldots, q\}$. $\bo{B} = (b_{ij})$ is $q \times q$ matrix whose terms are difficult to express. Concerning $\cov_4$, we get:
\begin{align*}
\cov_4&=\frac{1}{p+2}\times\E\big((\bo{X}^{\mbox{c}})^\top \cov^{-1}\bo{X}^{\mbox{c}}\bo{X}^{\mbox{c}}(\bo{X}^{\mbox{c}})^\top\big) 
 = \frac{1}{p+2}\times\E\Bigg[ \Bigg(\sum_{i=1}^{q}\sum_{j=1}^{q}x^{\mbox{c}}_ix^{\mbox{c}}_jb_{ij} +\sum_{i=q+1}^{p}(x^{\mbox{c}}_i)^2 \Bigg) \bo{X}^{\mbox{c}}(\bo{X}^{\mbox{c}})^\top\Bigg],
\end{align*} 
where $x^{\mbox{c}}_i$ are the coordinates of $\bo{X}^{\mbox{c}}$, for $i\in\{1,\ldots, p\}$. We proceed to compute the moments of the coordinates of $\bo{X}^{\mbox{c}}$ to obtain the final form of $\cov_4$. 

Let $f_{X}(\mu) = \mu$, $f_{X^2}(\mu, \sigma) = \mu^2+\sigma^2$, $f_{X^3}(\mu, \sigma) = \mu^3+3\mu\sigma^2$, and $f_{X^4}(\mu, \sigma) = \mu^4+6\mu^2\sigma^2+3\sigma^4$ be the functions that give the moments of order 1 to 4 of  the univariate normal distribution with mean $\mu$ and variance $\sigma^2$. 
The moments of the mixture are equal to the linear combination of the moments of its components. For $i,j \in\{1,\ldots, q\}$, $x^{\mbox{c}}_i$ and $x^{\mbox{c}}_j$ are not independent, but  for each mixture component, the coordinates are independent and we get the following expressions: 
\begin{equation*}
\begin{aligned}
    \E[x^{\mbox{c}^2}_a] &= \sum_{j=1}^{k} \alpha_j f_{X^2}(t^{\mbox{c}}_{ja}, 1), \quad
    \E[x^{\mbox{c}}_a x^{\mbox{c}}_b] = \sum_{j=1}^{k} \alpha_j f_{X}(t^{\mbox{c}}_{ja}) f_{X}(t^{\mbox{c}}_{jb}), \quad
    \E[x^{\mbox{c}^4}_a] = \sum_{j=1}^{k} \alpha_j f_{X^4}(t^{\mbox{c}}_{ja}, 1), \\
    \E[x^{\mbox{c}^3}_a x^{\mbox{c}}_b] &= \sum_{j=1}^{k} \alpha_j f_{X^3}(t^{\mbox{c}}_{ja}, 1) f_{X}(t^{\mbox{c}}_{jb}), \quad 
    \E[x^{\mbox{c}^2}_a x^{\mbox{c}^2}_b] = \sum_{j=1}^{k} \alpha_j f_{X^2}(t^{\mbox{c}}_{ja}, 1) f_{X^2}(t^{\mbox{c}}_{jb}, 1),\\
    \E[x^{\mbox{c}^2}_a x^{\mbox{c}}_b x^{\mbox{c}}_c] &= \sum_{j=1}^{k} \alpha_j f_{X^2}(t^{\mbox{c}}_{ja}, 1) f_{X}(t^{\mbox{c}}_{jb}) f_{X}(t^{\mbox{c}}_{jc}), \quad
    \E[x^{\mbox{c}}_a x^{\mbox{c}}_b x^{\mbox{c}}_c x^{\mbox{c}}_d] = \sum_{j=1}^{k} \alpha_j f_{X}(t^{\mbox{c}}_{ja}) f_{X}(t^{\mbox{c}}_{jb}) f_{X}(t^{\mbox{c}}_{jc}) f_{X}(t^{\mbox{c}}_{jd}),
\end{aligned},
\end{equation*}
for distinct $a, b, c, d \in\{1,\ldots, q\}$.
For $i > q$ and $j\in\{1,\ldots, p\}$, $x^{\mbox{c}}_i$ and $x^{\mbox{c}}_j$ are independent which leads to many elements of $\cov_4$ being equal to $0$ and to the following expression: 
$\cov_4= \begin{bmatrix} \bs{\Psi} & \bo{0} \\ \bo{0} & \bo{I}_{p-q} \end{bmatrix}$
where $\bs{\Psi} = (\psi_{ms})$ is $q \times q$ matrix whose terms for $m,s \in\{1,\ldots, q\}$ are given by:
\begin{align*}
    \psi_{ms} &= \frac{1}{p+2} \E\Bigg[\sum_{i=1}^{q}\sum_{j=1}^{q}x^{\mbox{c}}_m x^{\mbox{c}}_s x^{\mbox{c}}_i x^{\mbox{c}}_j b_{ij} + x^{\mbox{c}}_m x^{\mbox{c}}_s\sum_{i=q+1}^{p}(x^{\mbox{c}}_i)^2\Bigg]
    =\frac{1}{p+2}\Bigg[\sum_{i=1}^{q}\sum_{j=1}^{q} b_{ij}\E[x^{\mbox{c}}_m x^{\mbox{c}}_s x^{\mbox{c}}_i x^{\mbox{c}}_j] +(p-q) \E[x^{\mbox{c}}_m x^{\mbox{c}}_s]\Bigg].
\end{align*}

\subsection*{Appendix A.2. \, The case of Dirac mixture in Section~\ref{sec:iraccen}}

Upon the removal of the noise and the dimensions not associated with FDS, the Gaussian distributions are simplified to Dirac distributions with parameter $\bo{t}_j$, denoted by $\delta_{\bo{t}_j}$: $ \bo{X} \sim \sum_{j=1}^{k}\alpha_j \delta_{\bo{t}_j}, $
where $\bo{t}_{j} = \sum_{i=1}^{j} t_{ji} \bo{e}_{i}$, and \( \bo{e}_{i} \) is a $q$-dimensional vector with one in the $i$-th coordinate and zero elsewhere for $j \in \{1,\ldots, k-1\}$, $\bo{t}_k$ is the zero vector, and the $\bo{t}_j$ are distinct for $j \in \{1,\ldots, k\}$.
The calculations are very similar to the ones in Appendix A.1 with the within-group matrix $\bs{\Gamma}_W$ equals to the zero matrix, which removes some terms in the scatter expressions. In accordance with the previously established notation, it yields $\cov = [\beta_{ms}]$ where 
$\beta_{ms} = \sum_{j=1}^{k}\alpha_jt_{jm}^{\mbox{c}}t_{js}^{\mbox{c}}$
for $m,s \in\{1,\ldots, q\}$, and 

$$\cov_4 = \frac{1}{q+2}\times\E\Bigg[ \Bigg(\sum_{i=1}^{q}\sum_{j=1}^{q}x^{\mbox{c}}_ix^{\mbox{c}}_jb_{ij} \Bigg) \bo{X}^{\mbox{c}}(\bo{X}^{\mbox{c}})^\top\Bigg].$$

The moments of a random variable following a Dirac distribution centered at $t^{\mbox{c}}$ is given by $m_{\nu} = (t^{{\mbox{c}}})^{\nu}$.
All the moments can be expressed as 
$ \E[x^{\mbox{c}}_ax^{\mbox{c}}_bx^{\mbox{c}}_cx^{\mbox{c}}_d] =\sum_{j=1}^{k}\alpha_j t^{\mbox{c}}_{ja} t^{\mbox{c}}_{jb} t^{\mbox{c}}_{jc} t^{\mbox{c}}_{jd} $
for $a, b, c, d \in\{1, \ldots, q\}$. It yields 
$ \displaystyle \cov_4 =[\psi_{ms}]$ where

$$ \displaystyle \psi_{ms} =\frac{1}{q+2} \Bigg[ \sum_{\ell=1}^{k} \alpha_{\ell} t^{\mbox{c}}_{\ell m} t^{\mbox{c}}_{\ell s} (\bo{t}^{\mbox{c}}_{\ell})^\top \cov^{-1} (\bo{t}^{\mbox{c}}_{\ell}) \Bigg], m,s \in\{1,\ldots, q\},$$

$$ \displaystyle (\cov^{-1}\cov_4)_{ms} 
= \frac{1}{q+2} \Bigg[ \sum_{\ell=1}^{k} \alpha_{\ell} (\bo{t}^{\mbox{c}}_{\ell})^\top \cov^{-1} (\bo{t}^{\mbox{c}}_{\ell}) \sum_{i=1}^{q} b_{mi} t^{\mbox{c}}_{\ell i} t^{\mbox{c}}_{\ell s} \Bigg], m,s \in\{1,\ldots, q\}.$$

\section*{Appendix B. \, Details for the study of \texorpdfstring{$\cov^{-1}\cov_4$}{} eigenvalues when group proportions vary in the absence of within-group variability  in Section~\ref{sec:dprop}}

In Section~\ref{sec:dprop}, we consider the case of a Dirac mixture, the scatter pair $\cov-\cov_4$, and $p=q=k-1$, where $p$ is the number of variables and $q$ is the dimension of the space spanned by the group centers. We know that in this case the group centers have no effect on the eigenvalues of $\cov-\cov_4$. Therefore, we only vary the proportions of the groups. The eigenvalues of $\cov^{-1}\cov_4$ are obtained with numerical calculations using the theoretical developments described in Appendix A.2.

\subsection*{Appendix B.1. \, The case of three groups}

Subsection~\ref{subsec:dprop3} focuses on the case where the number of groups $k$ is 3, and thus the number of variables $p$ is 2. The eigenvalues are represented in ternary diagrams in Fig.~\ref{fig:tern_grad} and \ref{fig:tern_plot}. The construction of such figures is achieved through the generation of a grid comprising all possible combinations of the group proportions $\alpha_1$, $\alpha_2$ and $\alpha_3$, with values between zero and one, such that their sum is equal to one. The step size is 0.001 (0.1\%). This approach ensures exhaustive coverage of the proportion space and guarantees that every point on the ternary diagram represents a valid combination of $\alpha_1$, $\alpha_2$, and $\alpha_3$.
The eigenvalues of $\cov^{-1}\cov_4$, $\rho_1$ and $\rho_2$, are calculated for each combination of the grid. To perform this calculation, it is necessary to have values for the centers of the three groups. As these values will have no impact on the result, they may be assigned randomly. For the sake of simplicity, we choose the first row of Table S1.

In Fig.~\ref{fig:tern_grad}, it is of interest to highlight the case in which the eigenvalue is equal to one (white), as this is the case in which ICS does not ``work''. Subsequently, two situations are distinguished: when the eigenvalue is smaller than one (blue), and when the eigenvalue is greater than one (red). Both blue and red parts follow a color gradient to illustrate the distance of the value from one. The logarithm of the eigenvalue was used in the coloring of the plot, because it allows for a more balanced distribution of colors. The logarithmic scale reduces the range difference between the blue and red parts, and since $log(1) = 0$, it conveniently positions the transition color at a meaningful point. The scale of the gradient ranges from blue at the minimum of $log(\rho)$ to red at the maximum of $log(\rho)$, $log(\rho)=0$ being white. The gradient effectively illustrates the variations in the eigenvalue across the group proportions grid and highlights instances where ICS does not ``work''. However, the color bar at the bottom of the plot represents the original eigenvalue, enhancing the plot's interpretability.

Both eigenvalues are shown qualitatively in Fig.~\ref{fig:tern_plot}.  Since the information of interest is whether the eigenvalues are greater or less than one (as explained in Subsection \ref{subsec:dprop3}, three cases are distinguished: $\rho_1$ and $\rho_2 > 1$  (red), $\rho_1 > 1$ and $\rho_2 < 1 $ (purple), and $\rho_1$ and $\rho_2 < 1$ (blue)).

\subsection*{Appendix B.2. \, The case of \texorpdfstring{$k$}{} groups}

Subsection~\ref{subsec:dpropk} generalizes the context to a number $k$ of groups and focuses on the notion of ``thresholds''. Fig.~\ref{fig:thresh} is an illustration of the thresholds found using three different setups, for 2 to 10 groups. The threshold values are described in Table \ref{tab:thresholds}.

\begin{table}[!ht]
    \caption{Group proportion thresholds computed for different values of the number of groups $k$ and three setups. Setup 1: all group proportions initially equal. Setup 2: $\alpha_2$ near the Setup 1 threshold. Setup 3: $\alpha_2$ equal to 5\%.}
    \label{tab:thresholds}
    \vskip0.3cm\hrule
    \smallskip
    \centering\small
    \begin{tabular}{cccc}
        k & Setup 1 & Setup 2 & Setup 3  \\ \hline
        3 & 0.18 & 0.2 & 0.171  \\ 
        4 & 0.155 & 0.161 & 0.151  \\ 
        5 & 0.135 & 0.138 & 0.133  \\ 
        6 & 0.12 & 0.121 & 0.119  \\ 
        7 & 0.107 & 0.108 & 0.107  \\ 
        8 & 0.097 & 0.097 & 0.097  \\ 
        9 & 0.089 & 0.089 & 0.088  \\ 
        10 & 0.082 & 0.082 & 0.081
    \end{tabular}
    \hrule
\end{table}

The first setup (Setup~1 in Fig.~\ref{fig:thresh}) is the simplest. It is known that the sum of all group proportions must be equal to one. The most balanced scenario is the one in which all groups have the same proportion. In this case, each proportion is equal to $1/k$, where $k$ is the number of groups. Thus, the smallest group proportion cannot exceed $1/k$ and the largest group proportion cannot be strictly smaller than $1/k$.
A grid with group proportions is created for a given number of groups. The proportions are ordered in ascending order. For the first group proportion $\alpha_1$, the values range from 0.001 to$ 1/k$ with a step of 0.001. For each intermediate group proportion $\alpha_j$, $j \in\{2,\ldots, k-1\}$, the value is set to $1/k$. The last group proportion $\alpha_k$ is the remaining proportion: $\alpha_k = 1 - \sum_{j=1}^{k-1}\alpha_j$. This setup ensures that the sum of group proportions is exactly one. Table \ref{tab:combined_grids_eg} provides an example of the grid obtained with the first setup when $k=4$. For each scenario in the grid, the eigenvalues of $\cov^{-1}\cov_4$ are calculated. Initially, all proportions are equal to $1/k$, and all eigenvalues are less than one. Subsequently, the proportion of the first group is the smallest and decreases by 0.001 at each iteration. The threshold of Setup~1 is the first value of the first group proportion for which one eigenvalue exceeds one. This procedure is repeated for 2 to 10 groups.

\begin{table}[!ht]
    \caption{Group proportions ($\alpha_1, \ldots, \alpha_4$) used in Setups 1, 2, and 3 for $k = 4$ groups. Setup 1: all group proportions initially equal. Setup 2: $\alpha_2$ near the Setup 1 threshold. Setup 3: $\alpha_2$ equal to 5\%.}
    \label{tab:combined_grids_eg}
    \vskip0.3cm\hrule
    \smallskip
    \centering\small
    \begin{tabular}{cccc@{\hspace{1cm}}cccc@{\hspace{1cm}}cccc}
        \multicolumn{4}{c}{Setup 1} & \multicolumn{4}{c}{Setup 2} & \multicolumn{4}{c}{Setup 3} \\ \hline
        $\alpha_1$ & $\alpha_2$ & $\alpha_3$ & $\alpha_4$ & $\alpha_1$ & $\alpha_2$ & $\alpha_3$ & $\alpha_4$ & $\alpha_1$ & $\alpha_2$ & $\alpha_3$ & $\alpha_4$ \\ \hline
        0.25 & 0.25 & 0.25 & 0.25 & 0.25 & 0.175 & 0.25 & 0.325 & 0.25 & 0.05 & 0.25 & 0.45 \\ 
        0.249 & 0.25 & 0.25 & 0.251 & 0.249 & 0.175 & 0.25 & 0.326 & 0.249 & 0.05 & 0.25 & 0.451 \\ 
        0.248 & 0.25 & 0.25 & 0.252 & 0.248 & 0.175 & 0.25 & 0.327 & 0.248 & 0.05 & 0.25 & 0.452 \\ 
        \vdots & \vdots & \vdots & \vdots & \vdots & \vdots & \vdots & \vdots & \vdots & \vdots & \vdots & \vdots \\ 
        0.001 & 0.25 & 0.25 & 0.499 & 0.001 & 0.175 & 0.25 & 0.574 & 0.001 & 0.05 & 0.25 & 0.699
    \end{tabular}
    \hrule
\end{table}

The second setup (Setup~2 in Fig.~\ref{fig:thresh}) uses a grid based on the results obtained with the first setup. From the Setup~1 grid, the value of the second group proportion is updated. It is set to the threshold found in Setup~1 to which 2\% is added. The proportion of the first group is the same, ranging from 0.001 to $1/k$ with a step of 0.001. The intermediate groups have a proportion of $1/k$ only for group 3 through $k-1$, since the proportion of the second group is already set. The value of the last group proportion is also updated, but it is still equal to one minus the other group weights. This procedure ensures that the sum of each row is exactly one. An example of the grid obtained with the second setup when $k=4$ is shown in Table \ref{tab:combined_grids_eg}.  However, it requires at least three groups: one that is set to the first threshold plus 2\%, and two groups with variable proportions. For each configuration of group weights, the eigenvalues of $\cov^{-1}\cov_4$ are calculated. Initially, the first group is not necessarily the smallest one; it could be the second group. However, the value of the second group proportion is set such that it is above the threshold, implying that all eigenvalues are initially below one. Thus, the threshold of Setup~2 is, as in Setup~1, the first value of the first group weight for which one eigenvalue exceeds one.
The concept in this procedure revolves around examining the impact of a group that is positioned near the threshold to determine if this proximity affects it. This process is repeated for a number of groups ranging from 3 to 10.

Finally, the third procedure (Setup~3 in Fig.~\ref{fig:thresh}) uses a grid similar to Setup~2. The second group proportion is no longer dependent on the threshold of Setup~1 but is set to 0.05. Again, the last group proportion is updated to one minus the combined weights of the other groups.  All other group proportions remain the same as in Setups 1 and 2. Table \ref{tab:combined_grids_eg} provides an example of the third setup grid with $k=4$. As for the grid in Setup~2, it requires at least three groups: one for the fixed proportion and two to vary. The objective of this procedure is to include a small group initially, specifically below the threshold for $k$ from 3 to 10. This implies that, from the initial point, there is already an eigenvalue greater than one. Thus, the threshold of Setup~3 is the first value of the first group weight for which two eigenvalues exceed one. This is aimed at identifying the point where the second eigenvalue surpasses one. This setup is repeated for 3 to 10 groups.

\section*{Appendix C. \,  Proof of Proposition \ref{prop:k3q1}}
\begin{proof}[\textbf{\upshape Proof:}]
By adapting the calculations in Appendix A (with the same notations) to the Gaussian mixture model \eqref{modelk3q1}, we obtain:
\begin{align*}
\cov = \begin{bmatrix} \beta_{11} & \bo{0} \\ \bo{0} & \bo{I}_{p-1} \end{bmatrix}, \quad \cov^{-1} = \begin{bmatrix} b_{11} & \bo{0} \\ \bo{0} & \bo{I}_{p-1} \end{bmatrix}, \quad \cov_4 &= \begin{bmatrix} a_{11} & \bo{0} \\ \bo{0} & \bo{I}_{p-1}, \end{bmatrix}, 
\end{align*}
where $\beta_{11} = 1 + \sum_{j=1}^{3} \alpha_j (t_{j1}^{\mbox{c}})^2$, $b_{11} = \beta_{11}^{-1}$ and $ \displaystyle a_{11} =\frac{1}{p+2}\Bigg[b_{11}\E[(x^{\mbox{c}}_1)^4] +(p-1) \E[(x^{\mbox{c}}_1)^2]\Bigg]$. We have:
\begin{align*}
    \cov^{-1}\cov_4 = \begin{bmatrix} b_{11} a_{11} & \bo{0} \\ \bo{0} & \bo{I}_{p-1} \end{bmatrix}.
\end{align*}
All eigenvalues of $\cov^{-1}\cov_4$ equal one if and only if
\begin{equation}\label{eq:eigk3q1}
    b_{11} a_{11}=1,
\end{equation}
\eqref{eq:eigk3q1} is equivalent to $a_{11}=b_{11}^{-1}=\beta_{11}$. Noting that $\E[(x^{\mbox{c}}_1)^2]= \beta_{11}$, and expanding the expression of $a_{11}$, \eqref{eq:eigk3q1} is equivalent to $3\beta_{11}^2 = \E[(x^{\mbox{c}}_1)^4]$. Using the formula of $\beta_{11}$ and the formula of $\E[(x^{\mbox{c}}_1)^4]$ from Appendix A, we obtain that \eqref{eq:eigk3q1} is equivalent to:

\begin{equation*}
\begin{aligned}
& \alpha_1(3\alpha_1-1)(t_{11}^{\mbox{c}})^4 +\alpha_2(3\alpha_2-1)(t_{21}^{\mbox{c}})^4 + \alpha_3(3\alpha_3-1)(t_{31}^{\mbox{c}})^4 \\
&\quad +6\alpha_1\alpha_2(t_{11}^{\mbox{c}})^2(t_{21}^{\mbox{c}})^2 + 6\alpha_1\alpha_3(t_{11}^{\mbox{c}})^2(t_{31}^{\mbox{c}})^2 +6\alpha_2\alpha_3(t_{21}^{\mbox{c}})^2(t_{31}^{\mbox{c}})^2 = 0.
\end{aligned}
\end{equation*}

For the mixture \eqref{modelk3q1}, we know from Theorem 4 in \cite{tyler_invariant_2009} and using the computations of Subsection \ref{subsec:cc4calc}, that at least $p-1$ eigenvalues of $\cov^{-1}\cov_4$ are equal to one.
As for the remaining eigenvalue, using the computations of Subsection \ref{subsec:cc4calc} and Mathematica \cite{inc_mathematica_nodate}, we can prove that it is equal to one if and only if 
\begin{equation*}\label{eq:peq0}
    p_{\alpha_1, \alpha_2}(t_{11}, t_{12})=0,
\end{equation*}
where
\begin{equation*}
\begin{aligned}
   p_{\alpha_1, \alpha_2}(t_{11}, t_{12})&= \alpha_1 (-1 + 7 \alpha_1 - 12 \alpha_1^2 + 6 \alpha_1^3) t_{11}^4 + 4 \alpha_1 \alpha_2(1 - 6 \alpha_1 + 6 \alpha_1^2) t_{11}^3 t_{12} +
 6 \alpha_1 \alpha_2 (1 - 2 \alpha_2 \\ 
 & + \alpha_1 (-2 + 6 \alpha_2)) t_{11}^2 t_{12}^2 + 
 4 \alpha_1 \alpha_2 (1 - 6 \alpha_2 + 6 \alpha_2^2) t_{11} t_{12}^3 + \alpha_2 (-1 + 7 \alpha_2 - 12 \alpha_2^2 + 6 \alpha_2^3) t_{12}^4.
\end{aligned}
\end{equation*}
Since $t_{21} \neq 0$, we have:
\begin{align*}
    p_{\alpha_1, \alpha_2}(t_{11}, t_{12}) 
    &= t_{21}^4 r_{\alpha_1, \alpha_2}(t_{11}/t_{21}),
\end{align*}
for the polynomial  $r_{\alpha_1, \alpha_2}$ of degree 4 given in Proposition \ref{prop:k3q1}, which concludes the proof.
\end{proof}

\section*{Appendix D. \, Supplementary data}
Supplementary material related to this article can be found online at (to be filled by the journal).

\bibliographystyle{unsrtnat}
\bibliography{references_R2}

\end{document}